\documentclass[12pt,a4paper]{article}
\usepackage[a4paper,margin=1in,footskip=0.25in]{geometry}
\usepackage{amsthm}
\usepackage{amsmath}
\usepackage{amssymb}
\usepackage[dvipdfmx,unicode,bookmarks=true, bookmarksnumbered=true, bookmarksopen = false, colorlinks=true]{hyperref}
\usepackage{graphics}
\usepackage{epsfig}
\usepackage{geometry}
\usepackage{caption}
\usepackage{multirow}
\setcounter{MaxMatrixCols}{10}

\makeatletter
\newenvironment{subtheorem}[1]{%
  \def\subtheoremcounter{#1}%
  \refstepcounter{#1}%
  \protected@edef\theparentnumber{\csname the#1\endcsname}%
  \setcounter{parentnumber}{\value{#1}}%
  \setcounter{#1}{0}%
  \expandafter\def\csname the#1\endcsname{\theparentnumber.\Alph{#1}}%
  \ignorespaces
}{%
  \setcounter{\subtheoremcounter}{\value{parentnumber}}%
  \ignorespacesafterend
}
\makeatother
\newcounter{parentnumber}


\linespread{1.5}

\makeatletter \renewenvironment{proof}[1][\proofname]
{\par\pushQED{\qed}\normalfont\topsep6\p@\@plus6\p@\relax\trivlist\item[\hskip\labelsep\bfseries#1\@addpunct{.}]\ignorespaces}{\popQED\endtrivlist\@endpefalse} \makeatother

\theoremstyle{plain}
\newtheorem{thm}{Theorem}
\newtheorem{lem}{Lemma}
\newtheorem{defn}{Definition}

\newtheorem{cor}{Corollary}
\newtheorem{prop}{Proposition}

\begin{document}
\author{Yangbo Song\thanks{Department of Economics, UCLA. Email: darcy07@ucla.edu.}}
\date{October 8, 2015}
\title{Social Learning with Coordination Motives}
\maketitle

\begin{abstract}

The theoretical study of social learning typically assumes that each agent's action affects only her own payoff. In this paper, I present a model in which agents' actions directly affect the payoffs of other agents. On a discrete time line, there is a community containing a random number of agents in each period. Before each agent needs to take an action, the community receives a private signal about the underlying state of the world and may observe some past actions in previous communities. An agent's payoff is higher if her action matches the state or if more agents take the same action as hers. I analyze two observation structures: exogenous observation and costly strategic observation. In both cases, coordination motives enhance social learning in the sense that agents take the correct action with significantly higher probability when the community size is greater than a threshold. In particular, this probability reaches one (asymptotic learning) with unbounded private beliefs and can be arbitrarily close to one with bounded private beliefs. I then discuss the issue of multiple equilibria and use risk dominance as a criterion for equilibrium selection. I find that in the selected equilibria, the community size has no effect on learning under exogenous observation, facilitates learning under endogenous observation and unbounded private beliefs, and either helps or hinders learning under endogenous observation and bounded private beliefs.

\begin{flushleft}
\textbf{Keywords:} Information aggregation, Social earning, Coordination, Herding, Information cascade, Information acquisition
\end{flushleft}

\begin{flushleft}
\textbf{JEL Classification:} A14, C72, D62, D83, D85
\end{flushleft}
\end{abstract}

\newpage
\section{Introduction}

The study of social learning focuses on how valuable information is transmitted in a society of self-interested and strategic agents as well as how dispersed and decentralized information is aggregated to facilitate greater precision of knowledge. A typical situation involves a large number of individuals who make a single decision sequentially. The payoff of this decision depends on an unknown state of the world, about which each individual is given a noisy signal. The state of the world may refer to different economic variables in different applications, such as the quality of a new product, the return on an investment opportunity, or the intrinsic value of a research project. The probabilistic distribution of signals depends on the state and is assumed to be distinctive for each possible value of this state. Hence, if signals were observable, the aggregation of signals would be sufficient for individuals to ultimately learn the value of the state with near certainty. However, because signals are private and often cannot be transmitted via direct communication, an individual must extract information from observations of her predecessors' decisions to determine her own actions. A general and important question thus arises: what behaviors and observation structures can lead to the level of learning achieved by efficient information aggregation? In other words, under what conditions will observation reveal the true state, and how likely is it that agents will make the correct decision?

The above framework has been adopted widely in the literature, including but not exclusive to the notable study of herding behavior and information cascades in various applications, such as investments\cite{ScharfsteinandStein}, bank runs\cite{DiamondandDybvig} and technology adoption\cite{Choi}. Among the literature that provides a theoretical analysis, renowned early research by Bikhchandani, Hirshleifer and Welch\cite{BHW}, Banerjee\cite{Banerjee} and Smith and Sorensen\cite{SS} demonstrates that efficient information aggregation may fail: in a perfect Bayesian equilibrium, individuals eventually herd in choosing the wrong action with positive probability. Recent works such as Acemoglu et al.\cite{ADLO} consider a more general observation structure and note that society's learning of the true state depends on two factors: the possibility of arbitrarily strong private signals and the nonexistence of excessively influential individuals.

However, despite the large body of theoretical literature on social learning and information externalities, most models fail to account for a crucial factor that influences individual strategic behaviors: \textit{coordination motives}. In an environment with coordination motives, one agent's action may directly affect another agent's payoff. The absence of this effect greatly limits the range of applications that can be analyzed using the standard framework because coordination motives are prevalent in many strategic environments that involve social learning, ranging from the choice of computer software to the choice of research areas. In addition, the very existence of coordination motives often facilitates local information sharing in both signaling and observations because individuals in such situations have mutual interests in such sharing. Hence, one should expect to see very different patterns of action as well as information updating in the observational learning process. Finally, most existing studies assume that observation is given by some exogenous stochastic process, but in many applications, it is part of an agent's strategic decision. Presumably, once observation becomes a choice, it should have an immediate effect on the accuracy of action and should also change how coordination motives influence social learning. Therefore, a more general framework is needed to include these important elements in the study of social learning and to fully understand their impact.


To focus on the typical strategic environment with the above features, consider the following example. There is a group of consumers who need to decide which one of two possible smartphones to switch to in their usage. The sequence of actions is determined by the expiration dates of their current contract. Among this group, there are smaller ``communities'' of consumers (for example, college friends who enroll in the same wireless phone package) that make their decisions within a relatively small period of time. For a consumer in an arbitrary community, because interaction is more convenient among people using the same model, she prefers that others use the model that she chooses. Before she makes her own decision, she may observe some previously made decisions from other communities. Such observations may be exogenously given: they may simply come from noticing which smartphones other people in her social network are using. Alternatively, observations may be strategically chosen: the consumer can pay a registration fee to enter an online forum where she can see other consumers' choices with corresponding time stamps. If she is not able to view all the available posts, her rational choice is to select the most informative posts. Finally, regardless of the observation structure, she will most likely share her observations with others in her community but not with outsiders.

In this paper, I propose a model that is consistent with the framework of Bikhchandani, Hirshleifer and Welch\cite{BHW} and Acemoglu et al.\cite{ADLO} and is simultaneously flexible to analyze coordination motives under different observation structures. More formally, there is an underlying state of the world that is binary in value and cannot be observed directly. On an infinite and discrete time line, there is a community of random size in each period, and its members (the agents) each take a binary action simultaneously. The payoff of an agent depends on whether her action matches the state as well as what actions are taken by others in her community. The more agents take the same action as she does, the higher the payoff she enjoys. At the beginning of the period, the agents in the community obtain a noisy signal about the true state\footnote{The assumption of one signal for each community is used without loss of generality in the case of local sharing. Equivalently, we could assume that each agent has one signal that she shares only with others in her community.}. The value of this signal is common knowledge within the community but cannot be observed by any other community.

After obtaining the signal but before taking the action, the agents simultaneously observe a subset of actions of their predecessors, i.e., agents from previous communities. The observed actions are locally shared information as well: in other words, actions observed by one agent are also observed by every other agent within the same community. Observation is \textit{exogenous} if it is pre-determined, regardless of the signal and the community size. Observation is \textit{endogenous} if each agent can choose to pay a fixed cost and select a given number of ordered actions to observe.

Hence, there are three central determinants of the pattern of social learning: signal strength (bounded or unbounded beliefs), observation structure (exogenous or endogenous) and strength of coordination motives (community size). This paper establishes the first theoretical framework to understand the interaction among these factors and to answer the question of when observation is truth-revealing and whether asymptotic learning occurs in this more realistic but complex environment. In particular, I highlight the contrast between the pattern of learning in a singleton community and that in a potentially large community. As can be expected, coordination motives in a large community bring about an additional ``desire to conform'' that is absent when each agent cares only about her own action. However, as suggested by my major findings summarized below, the incentive to decide based on the group is not entirely negative. On the contrary, coordination motives may improve learning in each combination of signal structure and observation structure.

First, suppose that observation is exogenous. When beliefs are unbounded, meaning that a private signal may be arbitrarily informative about the true state, agents can almost surely know the state from their observed action sequence if they always observe an action that has been taken recently. Moreover, because observation is independent from signal value, the stronger notion \textit{asymptotic learning} can be achieved: not only do agents know the true state with near certainty, but their actions converge to the ``correct'' action as well. This result holds regardless of the community size and is consistent with existing results in the theoretical literature.

When beliefs are bounded, coordination motives facilitate better learning. Previous research has shown that when there is only one agent in each period (i.e., when the community size is one), observation never reveals the truth over time. However, I show that when the community size becomes sufficiently large, there exists an equilibrium with truth-telling observation. In such an equilibrium, an agent may take either of the two actions for any possible posterior belief that she has on the true state: given a certain range of a signal, she takes her action according to the signal and otherwise acts according to the observation. A rough intuition for this result is that when the community size is sufficiently large, if all but one agent in a community choose one action, it would be optimal for the remaining agent to choose the same action even if it is unlikely to be the ``correct'' action. Hence, even under bounded beliefs, it is possible for all agents to base their actions on their signal for a non-zero measure of signals, regardless of when they each move in the action sequence. The signal effectively serves as a correlation device, which ensures efficient information aggregation from observing the actions of predecessors. As a further result, depending on the construction of equilibrium strategies, the probability of taking the correct action can become arbitrarily close to $1$ at the limit; hence, asymptotic learning can be approximated under strong coordination motives even if the most precise signal has only limited information value. Indeed, as long as the probability of agents acting only according to the signal is positive, observation reveals the truth at the limit. Hence, decreasing this probability in turn increases the probability of taking the correct action.

Now suppose that observation is endogenous. An initial observation is that even under unbounded private beliefs, asymptotic learning is not achievable: the probability of taking the correct action is always bounded away from $1$. The reason is that with costly observation, an agent is not willing to observe whenever her signal is sufficiently precise but still not perfect. I then give a sufficient and necessary condition for truth-telling observation: the size of an agent's observed neighborhood becomes arbitrarily large over time. Because of the impossibility of asymptotic learning, any observation of finitely many actions has erroneous implication on the true state with positive probability; this probability of error can be eliminated once infinitely many actions are observed.

In the presence of coordination motives, the equilibrium learning patterns change significantly. When the community size is potentially large, an equilibrium emerges with asymptotic learning (with unbounded beliefs) or approximate asymptotic learning (with bounded beliefs). Such improvement in learning is driven by the possibility of incentivizing observation in a large community. For example, imagine a community in which all but one agent choose \textit{not} to observe any action. For the remaining agent, her observation is very valuable to both her peers and herself because when agents care about the actions of other agents, even a small improvement in learning about the true state brings a considerable increase in all of their payoffs. Moreover, additional incentives for observation can be provided by the credible threat of conforming to a sub-optimal action in the absence of observation. Following this intuition, I show that there exists an equilibrium in which at least one agent always chooses to observe for any value of the private signal. Therefore, the argument under exogenous observation can be applied to establish or approximate asymptotic learning. This implies that the negative incentive for observation, as induced by observation cost, can be eliminated by the marginal benefit of observation under coordination motives. At the same time, efficient information aggregation still exists as a result of either unbounded beliefs or a small but positive probability of coordinated actions based on signal only.

One prominent difference arising from the inclusion of coordination motives in the model is that multiple equilibria arise in general, in contrast to the generically unique equilibrium with singleton communities. In the discussion section, I address the issue of equilibrium selection by imposing the criterion of risk dominance. I show that the equilibrium in which each agent always maximizes the probability of action matching the state is risk dominant, and I reveal that in this equilibrium, stronger coordination motives still lead to better learning. Under bounded beliefs, however, the risk-dominant equilibrium has different implications: depending on the observation structure, the equilibrium learning probability with coordination motives may be higher, lower or unchanged relative to that with singleton agents.

The remainder of this paper is organized as follows. Section 2 provides a review of the related literature. Section 3 introduces the model. Sections 4 and 5 respectively present the main results under exogenous observation and endogenous observation. Section 6 discusses some additional features and extensions of the model. Section 7 concludes the paper. All proofs are included in the Appendix.

\section{Literature Review}

A large and growing body of literature examines the problem of social learning by Bayesian agents who can observe others' choices. This literature begins with Bikhchandani, Hirshleifer and Welch\cite{BHW} and Banerjee\cite{Banerjee}, who first formalize the problem systematically and concisely and identify information cascades as the cause of herding behavior. In their models, the informativeness of the observed action history outweighs that of any private signal with a positive probability, and herding occurs as a result. Smith and Sorensen\cite{SS} propose a comprehensive model of a similar environment with a more general signal structure. They show that signal strength plays a decisive role in social learning in the sense that the possibility of arbitrarily strong signals is necessary and sufficient for asymptotic learning in their framework. The concepts of bounded and unbounded private beliefs introduced by those authors will play an important role in the remainder of the current paper. These seminal papers, along with the general discussion by Bikhchandani, Hirshleifer and Welch\cite{BHW2}, assume that agents can observe the entire previous decision history, i.e., the whole ordered set of choices of their predecessors. This assumption can be regarded as an extreme case of exogenous observation structure. Related contributions to the literature include Lee\cite{Lee}, Banerjee\cite{Banerjee2} and Celen and Kariv\cite{CK}, where agents observe only a given fraction of the entire decision history.

A more recent paper by Acemoglu et al.\cite{ADLO} studies the environment in which each agent receives a private signal about the underlying state of the world and observes (some of) their predecessors' actions according to a general stochastic process of observation. Their main result states that when the private signal structure features unbounded belief, asymptotic learning occurs in each equilibrium if and only if the observation structure enables agents to always observe some close predecessor. Other recent research in this area include the works of Banerjee and Fudenberg\cite{BF}, Gale and Kariv\cite{GK}, Callander and Horner\cite{CH} and Smith and Sorensen\cite{SS2}, which differ from Acemoglu et al.\cite{ADLO} mainly in making alternative assumptions regarding observation, e.g., agents observe only the number of other agents taking each available action but not the positions of the observed agents in the decision sequence.

Two common assumptions made in the abovementioned literature are exogenous observation and pure informational externalities; according to the latter, an agent cares only about taking the correct action, and her payoff is not directly affected by others' actions. The literature exploring the relaxation of either of these assumptions is relatively under-developed. A few recent papers initiated the discussion on the impact of costly observations on social learning. In Kultti and Miettinen\cite{KM2}\cite{KM1}, both the underlying state and the private signal are binary, and an agent pays a cost for each action that she observes. In Celen\cite{Celen}, the signal structure is similar to the general one adopted in this paper, but it is assumed that an agent can pay a cost to observe the entire history of actions before hers. A much richer model is given by Song\cite{Song}, as it allows for the most general signal structure as well as the possibility that agents would need to strategically choose a proper subset of their predecessors' actions to observe. A major implication from these works is that the existence of observation costs prevents asymptotic learning, although it may increase the informativeness of an observed action sequence because agents will sometimes rationally choose not to observe and rely on their signal.

The theoretical literature on the interplay between information cascades and coordination motives is also rather small. Moreover, the few existing papers often differ from one another in important aspects, such as the payoff function, the sequence of moves and the information update process (see, e.g., Choi\cite{Choi}, Dasgupta\cite{Dasgupta}, Jeitschko and Taylor\cite{JT}, Frisell\cite{Frisell}, Vergari\cite{Vergari}). However, there is also a small group of experimental studies of information cascades and payoff externalities (see, e.g., Hung and Plott\cite{HP}, Drehmann et al.\cite{Drehmann}). The major results of those studies suggest a learning pattern that is consistent with this paper: when agents care about the actions of one another beyond the informational externalities, they are both more likely to conform and more likely to take the ``correct'' action. Informational herding is thus reduced.

This paper can be positioned in line with the works of Bikhchandani, Hirshleifer and Welch\cite{BHW}, Smith and Sorensen\cite{SS}, Acemoglu et al.\cite{ADLO} and others in the sense that I adopt the general signal structure and the sequential decision process developed in these models. Nevertheless, this paper differs from the previous research in two important aspects. First, instead of assuming an exogenous observation structure, I allow observation to occur as part of an agent's strategic decision. Second, in addition to informational externalities, my model also features payoff externalities: the more agents take the same action, the higher the payoff each agent enjoys. As shown subsequently in this paper, these assumptions not only are more realistic in most applications but also have significant impacts on the equilibrium learning pattern.

In this paper and in most of the cited theoretical papers above, agents are assumed to update their beliefs according to Bayes' rule. There is also a well-known body of literature on non-Bayesian observational learning. In these models, rather than applying Bayes' update to obtain the posterior belief regarding the underlying state of the world by using all the available information, agents may adopt some intuitive rule of thumb to guide their choices (Ellison and Fudenberg\cite{EF}\cite{EF2}), may update their beliefs according to only part of their information (Bala and Goyal\cite{BG}\cite{BG2}), may naively update their beliefs by taking weighted averages of their neighbors' beliefs (Golub and Jackson\cite{GJ}), or may be subject to bias in interpreting information (DeMarzo, Vayanos and Zwiebel\cite{DVZ}).

Finally, the importance of observational learning has been well documented in both empirical and experimental studies, in addition to those already mentioned. Both focusing on the adoption of new agricultural technology, Conley and Udry\cite{CU} and Munshi\cite{Munshi2} not only support the importance of observational learning but also indicate that observation is often constrained because, in practice, a farmer may be unable to receive information regarding the choice of every other farmer in the area. Munshi\cite{Munshi} and Ioannides and Loury\cite{IL} demonstrate that social networks play an important role in individuals' acquisition of employment information. Cai, Chen and Fang\cite{CCF} conduct a natural field experiment to indicate the empirical significance of observational learning in which consumers obtain information about product quality from the purchasing decisions of others.

\section{Model}

\subsection{Private Signal Structure}

Consider a discrete and infinite time line: $t=1,2,...$. At each period $t$, there is a set of agents $N^t$ that move simultaneously. We refer to $N^t$ as a \textit{community}. The community size $Q^t=|N^t|$ is randomly selected from a commonly known probability distribution $G$ on $\mathbb{N}^+$, with the largest community size in the support being finite. The $Q^t$ values are independent and identically distributed (i.i.d.) over time. $Q^t$ is common knowledge for agents in $N^{t'}$, $t'\geq t$, at the beginning of period $t'$.

Let $\theta\in\{0,1\}$ be the state of the world with equal prior probabilities, i.e., $Prob(\theta=0)=Prob(\theta=1)=\frac{1}{2}$. Given $\theta$, an i.i.d. private signal $s^t(Q^t)\in S=(-1,1)$ is realized in period $t$ after the realization of $Q^t$, which is observed by every agent in $N^t$ and by no one else. The $s^t(Q^t)$ values are independently distributed, but their distributions can be heterogeneous depending on $Q^t$. One interpretation of this setting is that each agent receives and shares a signal with the community; the more agents there are, the more precise the aggregated information about the true state is.

The probability distributions regarding the signal conditional on the state are denoted as $F^{Q}_0(s)$ and $F^{Q}_1(s)$ (with continuous density functions $f^{Q}_0(s)$ and $f^{Q}_1(s)$), where $Q$ denotes the community size. The pair of measures $\{F^{Q}_0,F^{Q}_1\}_{Q\in\mathbb{N}^+}$ are referred to as the \textit{signal structure}, and I assume that the signal structure has the following properties for every $Q$:
\begin{itemize}
\item{1.} The pdf's $f^Q_0(s)$ and $f^Q_1(s)$ are continuous and non-zero everywhere on the support, which immediately implies that no signal is fully revealing of the underlying state.
\item{2.} Monotone likelihood ratio property (MLRP): $\frac{f^Q_1(s)}{f^Q_0(s)}$ is strictly increasing in $s$. This assumption is made without loss of generality: as long as no two signals generate the same likelihood ratio, the signals can always be re-aligned to form a structure that satisfies the MLRP.
\end{itemize}

The focus of this paper is to examine the interaction among signal, observation and externalities and to identify conditions that need to be imposed on each factor to ensure the highest possible level of learning. To address this issue and present the major findings, it is useful to first introduce a notation that categorizes the signal structure. The \textit{private belief} of an agent is defined by the probability of the true state being $1$ according to her signal only, and it is given by $\frac{f^Q_1(s)}{f^Q_0(s)+f^Q_1(s)}$.

\begin{defn}
We say that agents have \textit{unbounded private beliefs} if $\lim_{s\rightarrow 1}\frac{f^Q_1(s)}{f^Q_0(s)+f^Q_1(s)}=1$ and $\lim_{s\rightarrow -1}\frac{f^Q_1(s)}{f^Q_0(s)+f^Q_1(s)}=0$ for some $Q$ on the support of distribution $G$. We say that agents have \textit{bounded private beliefs} if  $\lim_{s\rightarrow 1}\frac{f^Q_1(s)}{f^Q_0(s)+f^Q_1(s)}<1$ and $\lim_{s\rightarrow -1}\frac{f^Q_1(s)}{f^Q_0(s)+f^Q_1(s)}>0$ for every $Q$ on the support of distribution $G$.
\end{defn}

Unbounded private beliefs correspond to a situation in which a community may receive an arbitrarily strong signal about the underlying state, while bounded beliefs indicate that the amount of information that can be derived from a single private signal is always limited.

\subsection{The Sequential Decision Process}

The agents in $N^t$ each take a single action simultaneously between $0$ and $1$. Let $a_n^t\in\{0,1\}$ denote the action of agent $n$ in $N^t$.

Agent $n$ cares about the action of every agent in $N^t$. Given $\{a_i^t:i\in N^t\}$, the payoff of agent $n$ is equal to $u(\theta,a_n^t,m)>0$, where $m$ is the number of actions in $\{a_i^t:i\in N^t\}$ that are the same as $a_n^t$. I make the following assumptions about $u$:
\begin{itemize}
\item{1.} Given every $\theta,a_n^t$, $u$ is increasing in $m$. This assumption means that every agent prefers that more of her peers take the same action as she does.
\item{2.} Given every $m$, $u(0,0,m)=u(1,1,m)>u(1,0,m)=u(0,1,m)$. This assumption means that every agent prefers to take the ``correct'' action, i.e., the action that matches the state.
\item{3.} When $m$ is sufficiently large, $u(a,b,m)>u(a,1-b,1)$, $a,b\in\{0,1\}$. This assumption means that coordination motives (conforming to the majority) can dominate information motives (matching the state) in a large community.
\end{itemize}

The direct influence of every agent's action on the payoffs of other agents within the same community differentiates this model from most theoretical literature on social learning. In addition to the widely studied informational externalities that arise from sequential observation, there now exists a new parallel economic force, coordination motives, that generates an incentive for an agent to conform with her peers. This incentive becomes stronger as the community size increases. The primary goal of this paper is to ascertain how this incentive affects individual behavior as well as the overall learning level and to determine whether it improves or impairs the likelihood of agents taking the correct action over time.

After receiving signal $s^t(Q^t)$ and before engaging in the above action, the agents may observe some of the actions taken by their predecessors. In this paper, I will discuss two possible structures of observation.

\subsubsection{Exogenous observation}

The agents in $N^t$ observe the ordered action sequence in a \textit{neighborhood} $B^t\subset\{\cup_{N^i\in\mathcal{N}}N^i:\mathcal{N}\subset\{N^1,\cdots,N^{t-1}\}\}$ (each agent in $N^t$ observes the same action sequence). The neighborhood $B^t$ is generated according to a probability distribution $O^t$ over the set $\{N^1,\cdots,N^{t-1}\}$. The draws from each $O^t$ are independent from one another for all $t$ and from the realization of community size and private signals. Let $\bar{B}^t=\cup_{B^t:O^t(B^t)>0}B^t$ be the union of all possible neighborhoods that can be observed in period $t$. The sequence $\{O^t\}_{t\in\mathbb{N}^+}$ is called the \textit{observation structure} and is common knowledge, while the realization of $s^t$ and $B^t$ are known by agents only in $N^t$.


Let $H^t=\{a_m\in\{0,1\}:m\in B^t, B^t\subset \bar{B}^t\}$ denote the set of actions that $n$ can possibly know from observation, and let $h^t$ be a particular action sequence in $H^t$. Let $I^t=\{s^t(Q^t),h^t\}$ be $n$'s \textit{information set}. Note that the information set of every agent in $N^t$ is the same. The set of all possible information sets of $n$ is denoted as $\mathcal{I}^t$.

A \textit{strategy} for $n$ is a mapping $\phi_n^t: \mathcal{I}^t\rightarrow \{0,1\}$ that selects a decision for every possible information set. A \textit{strategy profile} is a sequence of strategies $\phi=\{\phi^t\}_{t\in\mathbb{N}^+}=\{\{\phi_n^t\}_{n\in\{1,\cdots,Q^t\}}\}_{t\in\mathbb{N}^+}$. I use $\phi_{-n}^t=\{\phi_{n'}^t\}_{n'\neq n}$ to denote the strategies of all agents other than $n$ in period $t$, $\phi_{-t}=\{\phi^{t'}\}_{t'\neq t}$ to denote the strategies of all agents other than those in $N^t$, and $\phi_{-n,t}=(\phi_{-n}^t,\{\phi^{t'}\}_{t'\neq t})$ to denote the strategies of all agents other than $n$.

Given a strategy profile, the sequence of decisions $\{a_n^t\}_{n\in\mathbb{N}}$ is a stochastic process. I denote the probability measure generated by this stochastic process as $\mathcal{P}_{\phi}$.

\subsubsection{Endogenous Observation}

The agents in $N^t$ simultaneously acquire information about the previous decisions of other agents through observation. Each agent $n$ can pay a cost $c>0$ to obtain a \textit{capacity} $K(t)\in\mathbb{N}^+$; otherwise, he pays nothing and chooses $\varnothing$.

With capacity $K(t)$, agent $n$ can select a \textit{neighborhood} $B^t(n)\subset\cup_{i=1}^{t-1}N^i$ of maximum size $K(t)$, i.e., $|B^t(n)|\leq K(t)$, and observe the action of each agent in $B^t(n)$. The actions in $B^t(n)$ are observed simultaneously, and no agent can choose any additional observation based on what she has already observed. Let $\mathcal{B}^t(n)$ denote the set of all possible neighborhoods that $n$ can observe. After the agents make their decisions regarding observation, the actions that they choose to observe are revealed and become public information within $N^t$. That is, every agent in $N^t$ observes $B^t=\cup_{n=1}^{Q^t} B^t(n)$.

An agent's strategy in the above sequential game consists of two problems: (1) given her private signal, whether to make costly observation and, if so, whom to observe; (2) after observation (or not), which action to take between $0$ and $1$ given the realization of observed actions. With some abuse of notation, let $H^t=\{a_m\in\{0,1\}:m\in B\subset \cup_{i=1}^{t-1}Q^i,|B|\leq Q^tK(t)\}$ denote the set of actions that $n$ can possibly know from observation by herself and others, and let $h^t$ be a particular action sequence in $H^t$. $I^t=\{s^t(Q^t),h^t\}$ and $\mathcal{I}^t$ are defined similarly as that used above.

A \textit{strategy} for $n$ is the set of two mappings $\sigma_n^t=(\sigma_n^{t,1},\sigma_n^{t,2})$, where $\sigma_n^{t,1}:S\rightarrow \mathcal{B}^t(n)$ selects $n$'s choice of observation for every possible private signal and $\sigma_n^{t,2}:\mathcal{I}^t\rightarrow \{0,1\}$ selects an action for every possible information set. A \textit{strategy profile} is a sequence of strategies $\sigma=\{\sigma^t\}_{t\in\mathbb{N}^+}=\{\{\sigma_n^t\}_{n\in\{1,\cdots,Q^t\}}\}_{t\in\mathbb{N}^+}$. I use the notation $\sigma_{-n}^t=\{\sigma_{n'}^t\}_{n'\neq n}$, $\sigma_{-t}=\{\sigma^{t'}\}_{t'\neq t}$ and $\sigma_{-n,t}=(\sigma_{-n}^t,\{\sigma^{t'}\}_{t'\neq t})$ in a manner similar to that used for exogenous observation.

Given a strategy profile, the sequence of decisions $\{a_n^t\}_{n\in\mathbb{N}}$ is a stochastic process. I denote the probability measure generated by this stochastic process as $\mathcal{P}_{\sigma}$.

A decisive difference between exogenous and endogenous observation lies in how observation correlates with signal. Under exogenous observation, no correlation between signal and observation exists because they are simply two independent processes. Under endogenous observation, however, observation--whether to observe and, if so, whom to observe--may depend on the value of the private signal because it is now part of an agent's optimal decision. Conceivably, for an agent who attempts to extract information about the true state from her observation, her inference on private signals and the observation of her predecessors, which then partially determines her posterior belief on the state, will be formed very differently under the two observation structures. As shown in subsequent sections of the paper, observation structure has a significant impact on the pattern of social learning.

\subsection{Perfect Bayesian Equilibrium}

\begin{defn}
A strategy profile $\sigma^*$ (resp. $\phi^*$) is a pure strategy \textbf{perfect Bayesian equilibrium} (PBE) if, for each $t\in\mathbb{N}^+$ and $n\in\{1,\cdots,Q^t\}$, $\sigma_n^{t*}$ is such that given $\sigma^*_{-n,t}$, (1) $\sigma_n^{*t,2}(I^t)$ (resp. $\phi_n^{*t}(I^t)$) maximizes the expected payoff of $n$ given every $I^t\in \mathcal{I}^t$ and (2) $\sigma_n^{*t,1}(s_n^t)$ maximizes the expected payoff of $n$, given every $s_n^t$ and given $\sigma_n^{*t,2}$.
\end{defn}

Whether observation is exogenous or endogenous, the idea underlying PBE is similar: given all available information and the strategy of each predecessor and each peer, an agent determines her payoff-maximizing strategy. In a model without coordination motives, this strategy always coincides with the strategy that maximizes the probability of taking the correct action, but in this situation, it may not because the actions of one's peers must also be considered. An equilibrium strategy under endogenous observation differs from one under exogenous observation in its additional component of observation choice after receiving the private signal. In such a case, an agent optimizes her observation according to her signal value and others' strategies.

Throughout the remainder of the paper, I simply refer to PBE as ``equilibrium''.

\begin{prop}
In every equilibrium $\sigma^*$ (resp. $\phi^*$) and for every $t$, actions are always unanimous in $N^t$: for every $I^t$, $\sigma_n^{*t,2}(I^t)=\sigma_m^{*t,2}(I^t)$; for every $m,n\in N^t$.
\end{prop}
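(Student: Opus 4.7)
My plan is to argue by contradiction. The starting point is that every agent in $N^t$ shares the exact same information set $I^t = \{s^t(Q^t), h^t\}$: the signal is common within the community by assumption, and under endogenous observation the acquired actions become public in $N^t$ once the observation stage ends. Hence every agent in $N^t$ computes the same posterior $p = \Pr(\theta = 1 \mid I^t)$, and the only thing that can differ across members of $N^t$ at the action stage is the count $m$ that each agent would attain under each of her two available actions.

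Suppose for contradiction that two agents $n, m' \in N^t$ choose different equilibrium actions at some $I^t$, say $a_n^t = 0$ and $a_{m'}^t = 1$, and let $j \geq 1$ be the number of zeros that the equilibrium strategy profile prescribes on $I^t$ (so $Q^t - j \geq 1$). I would write down the best-response inequality for each agent: optimality of $a_n^t = 0$ says that the expected payoff of playing $0$ at count $j$ weakly exceeds that of playing $1$ at count $Q^t - j + 1$, and optimality of $a_{m'}^t = 1$ says that the expected payoff of playing $1$ at count $Q^t - j$ weakly exceeds that of playing $0$ at count $j + 1$. Using the identities $u(0,0,\cdot) = u(1,1,\cdot)$ and $u(1,0,\cdot) = u(0,1,\cdot)$ from assumption~2, both inequalities can be re-expressed in terms of a common pair of functions of the count, weighted by $p$ and $1-p$.

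Adding the two inequalities and cancelling, the ``best-response side minus deviation side'' reduces to a weighted sum of four unit-step increments of $u$ in $m$ of the form $u(\cdot,\cdot,k) - u(\cdot,\cdot,k+1)$, with nonnegative weights summing to $2$. By the monotonicity of $u$ in $m$ from assumption~1, every one of those increments is nonpositive, so the combined inequality is incompatible with the two best-response hypotheses as soon as at least one increment is strict. This is the contradiction that forces $a_n^t = a_{m'}^t$.

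The main obstacle I foresee is the knife-edge case in which $u$ is locally flat in $m$ at all four relevant counts: there, both best-response inequalities hold with equality and the disagreement $(a_n^t, a_{m'}^t) = (0,1)$ is payoff-equivalent to the unanimous profile. I would close the argument either by invoking strict monotonicity of $u$ in $m$ as a mild strengthening of assumption~1 (which is effectively implied by assumption~3 for the range of counts relevant to a ``sufficiently large'' community), or by a standard tie-breaking convention that selects unanimity among payoff-equivalent best responses. In either case the proposition holds as stated, and the argument is identical under exogenous and endogenous observation because it only invokes the action-stage problem on the common information set $I^t$.
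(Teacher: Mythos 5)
Your proof is correct and takes essentially the same route as the paper's: write the best-response inequality for an agent on each side of the hypothesized split, combine the two, and use monotonicity of $u$ in $m$ to reach a contradiction. The knife-edge concern you raise is resolved the same way in the paper, which implicitly reads assumption 1 as \emph{strict} monotonicity in $m$ (its final displayed inequality is declared a contradiction only under that reading).
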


Proposition 1 indicates an agent's incentive to conform to her peers in the same community. Note that the posterior belief on the true state is the same across the community, and consider the two sub-groups of agents choosing different actions. If an agent choosing action $1$ weakly prefers $1$ to $0$, then each agent choosing $0$ must strictly prefer $1$ to $0$. This is a contradiction, and hence, the only equilibrium action profile is unanimous. This result initially seems to indicate that coordination motives always exacerbate herding and are harmful for learning because there is now an additional incentive to ignore one's signal and submit to the majority. However, this result also implies that agents in a community may conform to an action profile that depends on their signal rather than on observation, such that their actions become \textit{more} informative for successors. As will be shown later, such behavior indeed improves social learning to a great extent.

Notably, indifference between the two actions can exist in a \textit{mixed strategy} equilibrium. In fact, when the community size is large, there always exists a mixed strategy equilibrium in which an agent's probability of mixing between $1$ and $0$ depends on the signal value. However, because the mixed strategy equilibrium does not provide additional insight into the relation between social learning and coordination motives, I will not discuss it in detail in this paper.

\subsection{Learning}

The main focus of this paper is to determine what type of information aggregation will result from equilibrium behavior. First, I define the different types of learning studied in this paper.

\begin{defn}
An equilibrium $\sigma^*$ (resp. $\phi^*$) has \textbf{asymptotic learning} if every agent takes the correct action at the limit:
\begin{align*}
\lim_{t\rightarrow\infty}\mathcal{P}_{\sigma^*}(a_n^t=\theta)=1\text{ for all $n$}.
\end{align*}
\end{defn}

In this paper, the unconditional probability of taking the correct action, $\mathcal{P}_{\sigma^*}(a_n^t=\theta)$, is also referred to as the \textit{learning probability}. Asymptotic learning requires that this probability converge to $1$, i.e., the posterior beliefs converge to a degenerate distribution on the true state. In terms of information aggregation, asymptotic learning can be interpreted as equivalent to making all private signals public and thus aggregating information efficiently. It marks the upper bound of social learning with any signal structure and observation structure.

Asymptotic learning may not always be achieved, especially under an endogenous observation structure, because a rational agent may choose not to make costly observations when her signal is already quite precise. In such a case, it is still interesting to determine whether information can be efficiently aggregated via observation, i.e., to ask the following question: \textit{when an agent decides to observe}, will her observation reveal the truth and lead her to act correctly? A formal analysis calls for the notion of \textit{truth-telling observation}, which is defined below.

Let $\hat{a}^t$ be a hypothetical action that is equal to the state with higher posterior probability given any $I^t$.

\begin{defn}
An equilibrium $\sigma^*$ (resp. $\phi^*$) has \textbf{truth-telling observation} if $\hat{a}^t=\theta$ whenever observation is non-empty at the limit:
\begin{align*}
\lim_{t\rightarrow\infty}\mathcal{P}_{\sigma^*}(\hat{a}^t=\theta|B^t\neq\varnothing)=1.
\end{align*}
\end{defn}

Truth-telling observation is a weaker condition than asymptotic learning in two aspects. First, it requires only the state-matching action $\hat{a}^t$ to be perfectly correct \textit{conditional on non-empty observation} as $t\rightarrow\infty$, as opposed to the unconditional correct action in asymptotic learning. Second, even in an equilibrium with truth-telling observation, an agent's action conditional on non-empty observation may \textit{not} coincide with $\hat{a}^t$. This stems from coordination motives: when the community size is large, the agents may conform to an action that matches the state with a probability lower than $\frac{1}{2}$. In contrast, asymptotic learning requires each agent's equilibrium action to be always the same as $\hat{a}^t$ at the limit. Therefore, truth-telling observation should be regarded as a notion describing only the maximum informativeness of observation but not the correctness of equilibrium behavior, while asymptotic learning represents the highest level of both.

\section{Results for Exogenous Observation}

In this section, I present the main results for exogenous observation. A well-established theoretical prediction in much of the literature, which typically assumes that only one agent moves in each period, is that herding occurs when private beliefs are bounded: with a positive probability, all agents ultimately choose the wrong action after a particular time threshold. This occurs because learning cannot be improved indefinitely: at a certain point, some agent's observation becomes so informative that she abandons her private signal altogether and herds with her predecessors, and by this time, social learning essentially ceases. In this section, I will demonstrate how coordination motives can prevent herding, can incentivize agents to use their private information, and can lead to a learning level that can be arbitrarily close to asymptotic learning.

\subsection{Two Conditions on Observation Structure}

When observation is exogenous, the observation structure--indicating which predecessors each agent observes--plays an important role in determining whether asymptotic learning is possible. This structure is sometimes referred to as a \textit{network} in the literature to highlight the connection between theory and application. To illustrate how the observation structure influences learning, I provide several typical observation structures below.

\begin{itemize}
\item{1.} $B^t=N^1$ for all $t$: a ``star network'' in which each agent observes only the actions in the first community in the action sequence.
\item{2.} $B^t=N^{t-1}$: a ``line network'' in which each agent observes only the closest community.
\item{3.} $B^t=\cup_{i=1}^{t-1}N^i$: a ``complete network'' in which each agent observes every predecessor. This is the upper bound of observational information that can be obtained.
\end{itemize}

In the first structure, asymptotic learning is never possible regardless of private beliefs and community size because learning cannot be improved beyond the second period: all agents after period $1$ are essentially identical in terms of information acquired because they have identical observation. In the second and third observation structures, if each community is a singleton, then asymptotic learning occurs when private beliefs are unbounded but never occurs when private beliefs are bounded. Herding occurs in the latter case with a probability bounded away from zero.

I now introduce two conditions on the observation structure that lead to approximate asymptotic learning in the presence of coordination motives, \textit{regardless of whether private beliefs are unbounded.} The first condition stands in contrast to the ``star network'' above.

\begin{defn}
An observation structure has \textbf{expanding observations} if, for every $K\in\mathbb{N}^+$, $\lim_{t\rightarrow\infty}O^t(\max\{\tau:Q^{\tau}\subset B^t\}<K)=0$.
\end{defn}

The concept of expanding observations for singleton communities was first introduced by Acemoglu et al.\cite{ADLO}, and in my paper, I generalize this concept to the case with non-singleton communities. This approach implies that an agent always observes a predecessor who is not too far away. The ``star network'' clearly does not have expanding observations, but the ``line network'' and ``complete network'' do. This property on observation structure ensures that if any improvement on learning is ever possible, it is transmitted to \textit{every} agent over time via observation in the sense that no agent will be blocked from any recent development in learning by only observing some distant predecessors.

However, expanding observations alone is not sufficient for asymptotic learning when private beliefs are bounded. For instance, the ``line network'' has expanding observations, but with bounded private beliefs, herding occurs with positive probability regardless of how large the community becomes. Therefore, I introduce the second condition.

\begin{defn}
An observation structure has \textbf{infinite complete observations} if there exists a infinite subset of time periods $\{t_1,t_2,\cdots\}$ such that (1) for every $K\in\mathbb{N}^+$, $\lim_{n\rightarrow\infty}O^{t_n}(|B^{t_n}|<K)=0$, and (2) $\lim_{n\rightarrow\infty}O^{t_n}(\bar{B}^{\tau}\subset B^{t_{n}}\text{ }\forall N^{\tau}\subset B^{t_n})=1$.
\end{defn}

The meaning of infinite observations is straightforward. Complete observations indicate the existence of a subset of agents such that an agent in this subset who observes a predecessor also observes all actions that can possibly be observed by the predecessor. At the limit, an agent's observed neighborhood contains infinitely many actions. To understand why this condition is needed for approximate asymptotic learning with bounded private beliefs, first consider an observation structure that has only finite observations, such as a ``line network''. On the one hand, bounded private beliefs prevent private information within any finite neighborhood from being arbitrarily informative about the true state. On the other hand, finite observations and bounded private beliefs together imply that once any observation by any predecessor in this neighborhood becomes sufficiently informative, it cannot be improved upon by substituting the predecessor's action. To approach asymptotic learning, the only possibility is to have infinitely many actions in an observed neighborhood.

Next, consider the case in which an agent has an ``incomplete'' observation of a predecessor. Hence, there are some actions that cannot be observed by the agent but that may have been observed by the predecessor. Now the predecessor's action has an ambiguous effect on the agent's posterior belief because the agent needs to consider those unobserved actions and make a corresponding inference based on her observation. As a result, the direction of her posterior belief--whether it favors state $0$ or $1$ after observing the predecessor's action--cannot be determined solely by the predecessor's action but depends on her observation. In other words, observing action $1$ by the predecessor may cause the agent to favor either state in different situations, which makes her updated belief intractable. In contrast, complete observations determine the direction of the agent's posterior belief without ambiguity. I provide further detail below when presenting the formal result.

A simple example of an observation structure with both expanding observations and infinite complete observations is the ``complete network'' in which every agent observes the entire action history. In general, the two conditions are satisfied by a wide class of observation structures.

\subsection{Main Result}

I now present the main theoretical result of this section.

\begin{thm}
Assume that the observation structure has expanding observations and infinite complete observations. There exists $\hat{Q}$ such that if $G(Q\geq \hat{Q})>0$, for every $\epsilon>0$, there exists an equilibrium $\phi^*$ such that (1) truth-telling observation occurs and (2) $\lim_{t\rightarrow\infty}\mathcal{P}_{\phi^*}(a_n^t=\theta)>1-\epsilon$.
\end{thm}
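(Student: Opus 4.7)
The plan is to construct a specific PBE in which, whenever the community is large enough, agents split their behavior between a ``signal region'' (where they play the sign of their signal) and a ``coordination region'' (where they play the action that maximizes the observational posterior). The key enabler is Assumption 3 on $u$: once the community is large, any unanimous action is self-enforcing in the sense that unilateral deviation is strictly dominated, regardless of the state.

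First, I would use Assumption 3 to pick $\hat Q$ so that $u(a,b,Q) > u(a,1-b,1)$ for all $a,b \in \{0,1\}$ whenever $Q \geq \hat Q$. Given $\epsilon > 0$, I then pick a signal threshold $s^* \in (0,1)$ close enough to $1$ so that $\mathcal{P}(|s^t(Q^t)| > s^*\text{ and }\operatorname{sign}(s^t(Q^t)) \neq \theta)$ is below $\epsilon/2$; this is possible because the relevant tails of $F_0^Q$ and $F_1^Q$ vanish at the boundary of the support. The proposed strategy $\phi^*$ is then: in a community with $Q^t \geq \hat Q$, play $\operatorname{sign}(s^t)$ if $|s^t| > s^*$ and otherwise play $\hat a^t$ (with ties broken in a fixed way); in a community with $Q^t < \hat Q$, play the full Bayesian best response to $(s^t, h^t)$.

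Verifying that $\phi^*$ is a PBE is essentially immediate. By Proposition 1, all agents in a community share information and must play the same action in equilibrium. For small communities, the prescribed action is by construction the best response given unanimous play and the symmetry property $u(0,0,m)=u(1,1,m)$, $u(0,1,m)=u(1,0,m)$. For $Q^t \geq \hat Q$, the choice of $\hat Q$ guarantees $u(\theta, \phi^*, Q^t) > u(\theta, 1-\phi^*, 1)$ pointwise, so any unilateral deviation is strictly dominated regardless of realized $\theta$ and hence also in expectation. The substantive part is proving that truth-telling observation and the $1-\epsilon$ learning bound hold at the limit. Conditional on $\theta$, the signal-based actions played in different periods by large communities are independent Bernoulli variables whose success probabilities $\mathcal{P}(s > s^* \mid \theta = 1)$ and $\mathcal{P}(s > s^* \mid \theta = 0)$ differ strictly by MLRP. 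The infinite complete observations condition produces a subsequence $\{t_n\}$ along which the observed neighborhood not only grows without bound but is closed under taking predecessors, so the Bayesian posterior $\mathcal{P}(\theta = 1 \mid h^{t_n})$ can be written as a function of a growing collection of these i.i.d.\ signal-based actions (together with deterministic functions thereof). A martingale/LLN argument then yields $\mathcal{P}(\theta = 1 \mid h^{t_n}) \to \mathbf 1_{\theta = 1}$ almost surely, which is truth-telling observation, and expanding observations transmits this convergence to all sufficiently late $t$. The final learning bound then follows because at the limit an agent's action disagrees with $\theta$ only if her community is large and her signal falls in the misleading part of the signal region, an event of probability at most $\epsilon/2$; observation-based actions agree with $\theta$ in the limit by truth-telling, and small communities' Bayesian best responses also agree with $\theta$ since their observations become perfectly informative.

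The main obstacle is the martingale/LLN step. The difficulty is that observed actions are a mixture of signal-driven and observation-driven components, and only the former are i.i.d.\ conditional on $\theta$; the completeness condition is what lets us disentangle them inside the posterior computation, but the bookkeeping is delicate because a predecessor's observation-driven action is a nontrivial function of her own predecessors' signal-driven actions, and one must confirm that conditioning on the full complete observation absorbs these dependencies without destroying the growing informational content of the i.i.d.\ signal-based subsequence.
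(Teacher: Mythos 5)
Your equilibrium construction matches the paper's exactly: Lemma 1 of the Appendix is your observation that Assumption 3 on $u$ makes any unanimous action profile self-enforcing once $Q^t\geq\hat Q$, and the paper's equilibrium likewise plays the signal in a tail region of probability $2\epsilon$ and plays $\hat a^t$ otherwise, with the final learning bound computed just as you compute it. The problem is the step you yourself flag as the main obstacle, which is also the only genuinely hard step: your route to truth-telling observation via a martingale/LLN argument on ``i.i.d.\ signal-based actions'' does not go through, and the paper explicitly disclaims this route (``the result is \emph{not} obtained by the law of large numbers, as observed actions are not mutually independent''). The observer sees only the unanimous action $a^\tau$ of each predecessor community; she cannot tell whether that action was signal-driven or observation-driven, so the posterior is not a function of a ``growing collection of i.i.d.\ signal-based actions'' in any way she can compute. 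The observation-driven component of $a^\tau$ is itself a function of $\theta$ through earlier actions, so conditional on $\theta$ the sequence $\{a^\tau\}$ is genuinely dependent, and the completeness condition does not ``disentangle'' the mixture --- it only guarantees that the observer knows everything the predecessor knew, which pins down the observation-driven part as a deterministic function of the observed history but still leaves the likelihood ratio of the whole sequence to be controlled.

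The paper's actual argument replaces the LLN with a uniform strict-improvement argument. Lemma 2 shows no false certainty: the posterior on the wrong state cannot approach $1$, by a counting argument on conditional probabilities. Lemma 3 shows that, because the signal region has probability exactly $1-2\epsilon$ under both states while the tail masses $F^Q_1(s_0(\epsilon))<F^Q_0(s_0(\epsilon))$ differ, each additional \emph{completely observed} community with unanimous action $0$ multiplies the likelihood ratio by a factor bounded away from $1$ \emph{uniformly in the history} $h$; hence finitely many, say $N$, consecutive confirming additional observations overturn any belief in $[\tfrac12,z]$. Lemma 4 then supposes the limiting error probability of $\hat a$ is $\rho>0$ and constructs a sub-optimal rule that flips the action after $N$ confirming additional observations; the net gain of this rule is bounded below by $\pi w^{N}\rho/2>0$ (the probability that the flip is helpful exceeds the probability it is harmful by this margin), contradicting the fact that the error probability has converged. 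If you want to complete your proof, this conditional, uniform-likelihood-ratio-increment argument is what has to replace the i.i.d.\ bookkeeping; as written, the claim that the posterior converges a.s.\ to $\mathbf 1_{\theta=1}$ is asserted rather than proved.
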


This result shows that coordination motives can serve as an economic force that counters the herding incentive in a way that hurts individual agents \textit{ceteris paribus} but benefits social learning. When the community size is large, the signal can be regarded as a correlating device to coordinate agents in the same community to conform to an action based on the signal value alone. This action may sometimes differ from the more ``informed'' action based on both signal and observation, but it does constitute mutual best responses and makes the actions of this community informative for successors. This is the key difference between a model with coordination motives and a model without them: in the latter, because every agent always seeks to maximize her probability of matching the state, herding can never be prevented when private beliefs are bounded.

Following the rough intuition above, I present a heuristic proof of Theorem 1 (the complete proof with technical details can be found in the Appendix). First, properties of Bayes' update determine that regardless of which is the true state, an agent's posterior probability on the \textit{wrong} state can never become arbitrarily close to $1$ over time because otherwise the same set of observation inducing this posterior probability would need to occur with $>1$ probability when the true state is altered, which is a contradiction.

Next, I construct an equilibrium in which each observed action is informative. Consider an action profile that follows observation--that is, choose the action matching the state with higher probability given observation only--when the signal is weak and that follows the signal when the signal is strong. This situation constitutes mutual best responses when $Q^t$ is large because the incentive to conform becomes stronger than the incentive to match the state. In this equilibrium, strong private signals are never abandoned. As a result, for an agent who has complete observation of another agent following such an action profile, Bayes' update from observing this additional action will induce a posterior belief in favor of the corresponding state, in contrast to the belief that occurs without adding this observation. This claim implies a more important property of equilibrium behavior: following any belief about the state, \textit{additional} observation of sufficiently many actions of the same value can induce a new belief that entails a higher ($>\frac{1}{2}$) probability of the corresponding state.

Now we can demonstrate the truth-telling nature of observation. Consider a subset of agents with infinite complete observations, and note that the hypothetical action $\hat{a}^t$ can be regarded as the \textit{optimal} action for some outside singleton agent who observes $B^t$ and attempts to maximize her probability of matching the state. Suppose that truth-telling observation does not occur, which implies that her highest learning probability is equal to some $\rho<1$. Fix a sufficiently large $t'$ such that observing $B^{t'}$ gives her a $\approx\rho$ probability of matching the state, and consider another sufficiently large number $\Delta$ and the following \textit{sub-optimal} strategy: given the action sequence in $B^{t'}$, she will change her action if and only if she observes $\Delta$ consecutive \textit{additional} actions that are the same value, which is opposite of the action that she would have taken by observing only $B^{t'}$. It can be shown that this sub-optimal strategy already improves her learning probability by a significant amount, which makes the total probability exceed $\rho$--thus revealing a contradiction. Notably, the result is \textit{not} obtained by the law of large numbers, as observed actions are not mutually independent: later actions are affected by earlier actions via observational learning. Instead, this strict improvement stems from calculating the difference between the probabilities of the $\Delta$ actions being ``helpful'' (in the sense that they correct a wrong belief) and ``harmful'' (in the sense that they mislead from a correct belief); details are provided in the Appendix.

Finally, I identify a direct inverse relation between the limit learning probability and the probability of agents acting according to signal only. Truth-telling observation implies that at the limit, the probability of taking the correct action conditional on non-empty observation is equal to $1$; hence, the total learning probability at the limit is the sum of the probability that agents consider their observation and the probability that a strong signal occurs favoring the true state. The cutoff for a strong signal is arbitrary--as long as each agent uses her signal for a fixed positive probability, truth-telling observation occurs. Hence, the higher this cutoff is, the more likely an agent chooses her action according to observation, and thus, the higher the learning probability is. In this way, any learning probability that is less than $1$ can be obtained in equilibrium.

Note that the condition of infinite complete observations is not required for asymptotic learning when private beliefs are unbounded. This has been proved for singleton communities in the literature (see, e.g., Smith and Sorensen\cite{SS}) and is extended to this model with coordination motives. I note this result below and use it in subsequent proofs.

\begin{prop}
Assume that the observation structure has expanding observations and that private beliefs are unbounded. There always exists an equilibrium with asymptotic learning.
\end{prop}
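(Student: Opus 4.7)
The plan is to construct a pure-strategy PBE in which every community plays the action that maximizes the posterior probability of matching the state given its full information set, and then to reduce the convergence question to the singleton-community case treated in the literature. I define the candidate strategy $\phi^*$ pointwise: at any information set $I^t=(s^t,h^t)$, every agent in $N^t$ selects $a^t=\mathbf{1}[\Pr(\theta=1\mid I^t)>1/2]$ (with an arbitrary tie-breaking rule), where the posterior is computed under the hypothesis that every predecessor community also follows $\phi^*$.

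Next I would verify that $\phi^*$ is an equilibrium. Writing $U^{+}(m)=u(0,0,m)=u(1,1,m)$ and $U^{-}(m)=u(1,0,m)=u(0,1,m)$, the assumptions give $U^{+}(m)>U^{-}(m)$ and both weakly increasing in $m$. Fix $I^t$, set $p=\Pr(\theta=a^t\mid I^t)\ge 1/2$, and suppose every other agent in $N^t$ plays $a^t$. The payoff from conforming minus the payoff from deviating alone equals
\[
p\bigl[U^{+}(Q^t)-U^{-}(1)\bigr]+(1-p)\bigl[U^{-}(Q^t)-U^{+}(1)\bigr],
\]
which is linear in $p$, has positive slope $[U^{+}(Q^t)+U^{+}(1)]-[U^{-}(Q^t)+U^{-}(1)]>0$, and takes the non-negative value $\tfrac{1}{2}[(U^{+}(Q^t)+U^{-}(Q^t))-(U^{+}(1)+U^{-}(1))]$ at $p=1/2$. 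Hence the expression is non-negative throughout $p\in[1/2,1]$, so $\phi^*$ is a best response at every information set and for every $Q^t$, and the unanimity required by Proposition 1 is built in.

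To obtain asymptotic learning, I would argue that each community, being forced to act unanimously and to pool its information, behaves as a single Bayesian decision-maker, and then invoke the improvement principle in the style of Smith--Sorensen and Acemoglu et al. Let $\alpha^t=\mathcal{P}_{\phi^*}(a_n^t=\theta)$; under unbounded private beliefs, for every $\alpha<1$ there exists $Z(\alpha)>0$ such that any agent who observes a single predecessor community with learning probability $\alpha$ and then applies $\phi^*$ with her own signal achieves learning probability at least $\alpha+Z(\alpha)$. Combined with expanding observations, this bounds $\liminf_t \alpha^t$ from below by its own value plus $Z(\liminf_t\alpha^t)$, forcing $\liminf_t\alpha^t=1$.

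The hardest step will be the improvement principle in the present setting, because the observed neighborhood $B^t$ generally contains several communities whose actions are correlated through common predecessors, so one cannot appeal directly to independence as in the singleton case. The argument closes as in the singleton literature by restricting attention to the most recent community contained in $B^t$: the full observation $h^t$ is Blackwell-more-informative than that single community's action, so the lower bound $\alpha+Z(\alpha)$ obtained from combining just that action with one's own unbounded signal carries over, and expanding observations ensures this recent community's own learning probability is already close to $\liminf_t\alpha^t$, yielding the desired contradiction unless $\liminf_t\alpha^t=1$.
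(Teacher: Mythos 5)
Your proposal is correct and takes essentially the same route as the paper: construct the truth-seeking equilibrium (the paper asserts the equilibrium property that you verify explicitly; it also follows from its Proposition 3) and run an improvement-principle argument along chains of observed communities guaranteed by expanding observations, using unbounded private beliefs to extract a strict gain in the learning probability whenever it stalls below one. Your Blackwell reduction to the most recent observed community is exactly the paper's comparison of the equilibrium action against the sub-optimal rule ``copy the most recently observed community's action unless the own signal is extreme,'' with the only cosmetic difference that the paper must additionally condition on the realized community size being one for which beliefs are unbounded, which rescales your $Z(\alpha)$ by a positive constant.
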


\section{Results on Endogenous Observation}

In this section, I analyze the model under endogenous observation by discussing the cases of unbounded and bounded private beliefs separately. Note that costly and strategic observation creates an independent economic force by itself: it discourages an agent from observation when her signal is informative because the additional benefit from observation becomes small or even negligible. With this added strategic component, the effect of coordination motives becomes more subtle, but in general, a similar implication can be derived: with sufficiently strong coordination motives, the level of social learning can be improved.

\subsection{Unbounded Private Beliefs}

\subsubsection{Singleton Communities}

To fully understand how coordination motives change the pattern of learning, it is important to first understand how singleton agents--that is, $G(1)=1$--behave when observation is endogenous; this behavior has scarcely been explored in the previous literature. First, I show that asymptotic learning never occurs in this environment.

\begin{subtheorem}{thm}
\begin{thm}[Song (2015)]
Asymptotic learning does not occur in any equilibrium.
\end{thm}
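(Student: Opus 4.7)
The plan is to show that whenever the private signal is sufficiently informative, paying to observe is strictly dominated in any equilibrium, and that conditional on this event the agent's error probability is bounded below by a positive constant that is independent of $t$ and of the strategies of other agents. Together these yield a uniform lower bound on the unconditional probability of error, which rules out asymptotic learning.

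First I would bound the marginal value of observation. For a singleton agent in period $t$ with private belief $p := p(s^t) = f^1_1(s^t)/\bigl(f^1_0(s^t)+f^1_1(s^t)\bigr)$, set $U_c := u(0,0,1)=u(1,1,1)$ and $U_w := u(1,0,1)=u(0,1,1)$, with $U_c>U_w$ by Assumption~2. If she does not observe, her optimal action yields expected payoff $\max(p,1-p)\,U_c+\min(p,1-p)\,U_w$; if she observes any $B^t(n)$ with any capacity $K(t)$, her expected payoff cannot exceed $U_c$, since even a fully revealing observation would at best raise the correct-action probability to $1$. Hence the gross value of observation is at most $(U_c-U_w)\min(p,1-p)$. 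Writing
\begin{equation*}
\bar{p}\;:=\;\frac{c}{U_c-U_w}\;>\;0,
\end{equation*}
we conclude that whenever $\min\!\bigl(p(s^t),1-p(s^t)\bigr)<\bar{p}$ observation is strictly dominated regardless of others' strategies, so in every equilibrium the agent chooses $\varnothing$ and plays $a_n^t=\mathbf{1}\{p(s^t)>1/2\}$.

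Second I would translate this into a constant error bound. Let $A:=\{s\in(-1,1):\min(p(s),1-p(s))<\bar{p}\}$, $A_0:=A\cap\{p<1/2\}$ and $A_1:=A\cap\{p>1/2\}$. Under unbounded private beliefs $p$ is continuous on $(-1,1)$ with $p(s)\to 0$ as $s\to-1$ and $p(s)\to 1$ as $s\to 1$, so both $A_0$ and $A_1$ have positive Lebesgue measure; combined with the continuity and strict positivity of $f^1_0$ and $f^1_1$ this gives
\begin{equation*}
\mathcal{P}_{\sigma^*}(a_n^t\neq\theta)\;\geq\;\tfrac{1}{2}\!\int_{A_0}\! f^1_1(s)\,ds\;+\;\tfrac{1}{2}\!\int_{A_1}\! f^1_0(s)\,ds\;=:\;\epsilon_0\;>\;0
\end{equation*}
for every equilibrium $\sigma^*$ and every $t$. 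The constant $\epsilon_0$ depends only on $c$, $U_c-U_w$, and the signal structure, not on $t$ or on any strategy played in other periods, so $\liminf_{t}\mathcal{P}_{\sigma^*}(a_n^t=\theta)\leq 1-\epsilon_0<1$ and asymptotic learning fails.

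The only genuinely delicate point is the dominance claim in the first step: it must hold \emph{in every equilibrium}, not just for a specific continuation play of predecessors and successors. This is secured by the fact that $(U_c-U_w)\min(p,1-p)$ is a \emph{feasibility} bound rather than a behavioural one — it uses nothing beyond the tautology that the post-observation probability of a correct action is at most $1$ — so it is uniform across all opponent strategies, all capacities $K(t)$, and all neighborhoods in $\mathcal{B}^t(n)$. Once dominance is in hand, the rest is direct integration against the prior-weighted signal density, and no martingale, cascade, or concentration argument is required.
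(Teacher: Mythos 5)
Your proof is correct and follows essentially the same route as the paper's: bound the gross payoff from any observation by $u(1,1,1)$ as a pure feasibility argument, compare it net of $c$ with the signal-following payoff, and use unbounded private beliefs to obtain a positive-measure set of signals on which observation is dominated in every equilibrium, yielding a time-uniform lower bound on the error probability. Your version merely makes the threshold $\bar{p}=c/(U_c-U_w)$ and the resulting constant $\epsilon_0$ explicit, which the paper leaves implicit.
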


Although the result stands in stark contrast to the well-known result in the literature with exogenous observation, the underlying argument for this result is rather straightforward. By assumption, no private signal perfectly reveals the true state; for asymptotic learning to occur in any equilibrium, a necessary condition is that an agent chooses to observe for almost every private signal. If an agent chooses to observe, her payoff is upper bounded by $u(1,1,1)-c$ because the best possible observation is one that fully reveals the true state and guarantees her a benefit of $u(1,1,1)$. If she chooses not to observe and simply follows her private signal $s$, her expected payoff is equal to $\frac{\max\{f^1_1(s),f^1_0(s)\}}{f^1_1(s)+f^1_0(s)}u(1,1,1)+(1-\frac{\max\{f^1_1(s),f^1_0(s)\}}{f^1_1(s)+f^1_0(s)})u(1,0,1)$. Because private beliefs are unbounded, for any positive $c$, there is always a positive measure of signals such that the payoff from no observation is higher, which implies that asymptotic learning never occurs.

Although asymptotic learning is impossible, efficient information aggregation can still be achieved in the form of truth-telling observation. Assuming a symmetric signal structure, the following result provides a necessary and sufficient condition for truth-telling observation in this environment as well as a full characterization of the limit learning probability.

\begin{thm}[Song (2015)]
Assume that the signal structure is symmetric: $f^1_0(s)=f^1_1(-s)$ for every $s\in S$. Truth-telling observation occurs in $\sigma^*$ if and only if $\lim_{t\rightarrow\infty} K(t)=\infty$. $\lim_{t\rightarrow\infty}\mathcal{P}_{\sigma^*}(a^t=\theta)=F^1_0(s^*)$ where $s^*$ is characterized by
\begin{align*}
\frac{f^1_1(s^*)}{f^1_0(s^*)+f^1_1(s^*)}u(1,1,1)+\frac{f^1_0(s^*)}{f^1_0(s^*)+f^1_1(s^*)}u(1,0,1)=u(1,1,1)-c.
\end{align*}
\end{thm}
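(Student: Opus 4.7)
The plan is to split the argument into three pieces: (i) a characterization of the optimal observation rule as a symmetric signal cutoff at $s^*$, (ii) the two implications of the ``if and only if'' statement, and (iii) a direct computation of the limit learning probability by conditioning on $\theta$. Throughout I would use $G(1)=1$ to suppress the community index, and rely on MLRP together with the symmetry $f^1_0(s)=f^1_1(-s)$ as the structural ingredients.

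First I would establish that $\sigma^*$ must use a threshold observation rule. Writing $P(s)=\max\{f^1_0(s),f^1_1(s)\}/(f^1_0(s)+f^1_1(s))$, the expected payoff from skipping observation and acting on the signal alone equals $P(s)\,u(1,1,1)+(1-P(s))\,u(1,0,1)$, which by MLRP and symmetry is strictly increasing in $|s|$. The expected payoff from observing is bounded above by $u(1,1,1)-c$, because no observation can do better than perfectly identifying $\theta$. The indifference equation in the statement therefore defines a unique cutoff $s^*>0$ such that signal-following strictly dominates observation whenever $|s|>s^*$. In this no-observation region, MLRP forces the agent to pick action $1$ if $s>s^*$ and action $0$ if $s<-s^*$. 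For $|s|<s^*$ the signal-only payoff is strictly below $u(1,1,1)-c$, so whenever the observation payoff attains this upper bound in the limit (i.e., under truth-telling observation) the agent strictly prefers to observe.

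For the ``if'' direction, assume $\lim_{t\to\infty}K(t)=\infty$. By the threshold characterization, in each period $\tau$ there is a probability $\geq F^1_0(-s^*)+1-F^1_0(s^*)>0$, bounded away from zero independent of history, that $|s^\tau|>s^*$ and the agent acts directly on her signal, producing an action whose conditional distribution given $\theta$ is a nontrivial Bernoulli. An observer in period $t$ with capacity $K(t)$ can select $K(t)$ past actions and perform Bayesian inference using the known equilibrium strategies; since a strictly positive fraction of those selected actions are signal-driven in expectation and each signal-driven action contributes an expected log-likelihood ratio bounded away from zero under the true state, the accumulated log-likelihood ratio diverges in favor of $\theta$ as $K(t)\to\infty$. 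The posterior on $\theta$ therefore concentrates on the truth and $\hat a^t=\theta$ with probability tending to one on $\{B^t\neq\varnothing\}$. The main obstacle will be that observation-driven actions depend on earlier history, so the observed sequence is not conditionally i.i.d.\ given $\theta$; I would handle this either by restricting the update to a sparse sub-sampling of past periods where independence across draws can be controlled, or by running a martingale argument on the period-$\tau$ public posterior that leverages unbounded private beliefs to force convergence to a degenerate limit.

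The ``only if'' direction rests on a standard absolute-continuity argument: if $K(t)\le \bar K$ for infinitely many $t$, then for such $t$ the observer's information set lies in a finite collection of binary histories of length at most $\bar K$, on which the Radon--Nikodym derivative $d\mathcal{P}_{\sigma^*}(\cdot\mid\theta=1)/d\mathcal{P}_{\sigma^*}(\cdot\mid\theta=0)$ is uniformly bounded; combining with the equally bounded signal-likelihood ratio available on the observation region $|s|<s^*$, the posterior on $\theta$ stays bounded away from $\{0,1\}$, so $\hat a^t=\theta$ cannot have limiting conditional probability one. Finally, for the learning probability I would condition on $\theta=0$: agents with $s<-s^*$ follow signal and act correctly (measure $F^1_0(-s^*)$), agents with $s>s^*$ follow signal and act incorrectly, and agents with $|s|<s^*$ observe and, by truth-telling observation, act correctly with probability tending to one (contributing $F^1_0(s^*)-F^1_0(-s^*)$ in the limit). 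Summing yields $F^1_0(s^*)$, and the symmetry $f^1_0(s)=f^1_1(-s)$ makes the same conditional probability arise under $\theta=1$, so the unconditional limit equals $F^1_0(s^*)$.
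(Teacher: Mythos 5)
Your cutoff characterization, your ``only if'' direction, and your computation of the limit probability $F^1_0(s^*)$ are all sound and match what the paper intends (note the paper imports this theorem from Song (2015) and gives only a verbal account plus the machinery of Lemmas 2--4 in the proof of Theorem 1, which it says carries over). The genuine gap is in your ``if'' direction. You propose accumulating log-likelihood ratios over the $K(t)$ observed actions and letting the sum diverge --- a law-of-large-numbers argument --- and you correctly flag that the observed actions are not conditionally i.i.d.\ given $\theta$, but neither of your two proposed fixes closes the hole. Sparse sub-sampling does not restore independence: an observation-driven action at time $\tau$ is a function of the \emph{entire} history up to $\tau$, including whatever sparse set of earlier periods you sampled, so the sampled actions remain dependent. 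The martingale route gives you almost-sure convergence of the public likelihood ratio for free, but the whole content of the theorem is showing the limit is degenerate at the truth, and the standard ``a strong signal would overturn the herd'' argument that identifies the limit is unavailable here precisely because agents with strong signals do not observe at all. The paper is explicit that the result follows ``\emph{not} from the law of large numbers but from an argument of continuing strict improvement similar to that in Theorem 1.''

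That argument is the missing idea: suppose the limiting error probability of $\hat a^t$ is $\rho>0$; by the analogue of Lemma 2, a posterior mass of at least $\rho/2$ must sit in a moderate band $[\tfrac12,z]$; by the analogue of Lemma 3 (each additional completely-observed action carries a likelihood ratio bounded away from $1$ uniformly in the history, because the signal-driven event $|s^\tau|>s^*$ has probability bounded away from zero and is history-independent), some fixed number $N$ of additional consecutive identical actions reverses any belief in that band; and then one compares the probability that those $N$ actions are ``helpful'' (correct a wrong $\hat a$) against the probability that they are ``harmful'' (overturn a right one), obtaining a strict improvement of at least $\pi w^{N}\rho/2$ that contradicts convergence of the error probability. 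Your log-likelihood heuristic points at the right ingredients (a positive, history-independent fraction of signal-driven actions with informative conditional distributions) but does not assemble them into a valid convergence proof; as written, the ``if'' direction is not established.
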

\end{subtheorem}


The property of truth-telling observation also holds in the case of exogenous observation with unbounded private beliefs and expanding observations, but the underlying mechanism here is much different. Under exogenous observation, an agent always uses her private information with a positive probability (which converges to 0 over time) because her signal can be strong enough to overwhelm the \textit{realized} observation. Under endogenous observation, an agent may choose to use her private information and \textit{not observe at all} because although observation can still be beneficial, its marginal benefit in information does not cover the cost. This probability of no observation does \textit{not} converge to 0 over time. As a result, an agent's individual action is always erroneous with a probability bounded away from 0, which then implies that observing any finite sequence of actions does not reveal the true state regardless of when the actions occurred. In other words, truth-telling observation never occurs when $\lim_{t\rightarrow\infty} K(t)\neq\infty$. However, this individual error is precisely the source of informativeness: because an agent sometimes chooses to forgo the (potentially more informative) observation, her action is indicative of the range of signals that she receives. Therefore, once an agent observes an arbitrarily large neighborhood, information can be aggregated efficiently to reveal the true state. Once again, this follows \textit{not} from the law of large numbers but from an argument of continuing strict improvement similar to that in Theorem 1.

In terms of the limit learning probability, it is straightforward that $F^1_0(s^*)$ is the largest possible learning probability in equilibrium, and it is achievable only when truth-telling observation occurs. After all, it is impossible in any equilibrium for any agent to choose to observe when her signal is not in $[-s^*,s^*]$. Hence, we can conclude that with unbounded private beliefs, endogenous observation lowers the limit learning probability compared with the most informative scenario under exogenous observation. However, endogenous observation may lead to a higher limit learning probability than exogenous observation without expanding observations because although agents will not observe the given extreme signals, they make more informed choices when they do observe. For instance, consider the ``star network'' in the previous example of observation structures. It can be shown that if observation is endogenous and $K(t)=1$, each agent will observe her immediate predecessor whenever she chooses to observe, and the limit learning probability is higher than that in the ``star network'' when $c$ is low.

\subsubsection{Non-Singleton Communities}

In this section, I present the main result for non-singleton communities and compare it with the result for singleton communities above.

\begin{thm}
There is a cutoff $\bar{c}>0$ such that for every $c\in(0,\bar{c})$, there exists $\hat{Q}(c)$ such that if $G(Q\geq \hat{Q}(c))=1$, there exists an equilibrium $\sigma^*$ with asymptotic learning.
\end{thm}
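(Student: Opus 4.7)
The plan is to construct an equilibrium in which, in every community, exactly one designated agent pays $c$ to observe a large and expanding neighborhood of past actions while the other members free-ride on the resulting pooled information, and to show that for $Q$ large the coordination motives both sustain this asymmetric observation profile and discipline the stage-two action into a unanimous state-matching choice. Once an observer in every period records a growing set of predecessor actions, the argument behind Proposition 2 (unbounded beliefs plus expanding observations) delivers asymptotic learning essentially as in the exogenous-observation case.

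First, fix a capacity schedule with $K(t)\to\infty$. In each $N^t$, designate one agent $n^*(t)$ (say the smallest-indexed) and prescribe: on path, $n^*(t)$ pays $c$, uses capacity $K(t)$, and observes one action from each of the $K(t)$ most recent communities; every other agent sets $\sigma^{t,1}_n=\varnothing$. After the pooled $B^t$ is revealed inside $N^t$, all agents play the unanimous action $\hat{a}^t$ that maximizes the posterior probability of matching $\theta$ given $I^t$. Off path, if $n^*(t)$ deviates to $\varnothing$, prescribe the punitive unanimous action $a=0$ for every agent regardless of signal; if any other agent pays to observe extra, the community still plays $\hat{a}^t$ based only on $n^*(t)$'s neighborhood.

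The incentive checks are then: (i) for $n\neq n^*(t)$, observing adds no informational value (since $n^*(t)$'s observation is already public within $N^t$ and additional observations do not alter $\hat{a}^t$ on path) and costs $c$, so $\varnothing$ strictly dominates; (ii) both the on-path play of $\hat{a}^t$ and the off-path play of $a=0$ are mutual best responses at stage two once $Q$ exceeds a threshold, by Proposition 1 combined with Assumption 3; (iii) for $n^*(t)$, the on-path payoff converges to $u(1,1,Q)-c$ as $t\to\infty$ (unbounded beliefs and $K(t)\to\infty$ concentrate the posterior on $\theta$), while the deviation payoff is $\tfrac12[u(0,0,Q)+u(1,0,Q)]=\tfrac12[u(1,1,Q)+u(1,0,Q)]$ by Assumption 2, so the net gain from observing is
\begin{align*}
\tfrac12\bigl[u(1,1,Q)-u(1,0,Q)\bigr]-c,
\end{align*}
which is strictly positive once $c<\bar{c}:=\tfrac12[u(1,1,1)-u(1,0,1)]$ (positive by Assumption 2) and $\hat{Q}(c)$ is chosen large enough that $\tfrac12[u(1,1,\hat{Q}(c))-u(1,0,\hat{Q}(c))]>c$; (iv) asymptotic learning on path follows because the community's action equals $\hat{a}^t$ and the effective observation sequence generated by the designated observers is expanding, placing us in the scope of Proposition 2.

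The main obstacle is step (iii), specifically showing that the observer's on-path payoff really converges to $u(1,1,Q)-c$. This is not a consequence of the law of large numbers because actions across periods are endogenously correlated through observational learning; instead it must be established by the Bayesian-improvement argument from Theorem 1 — namely that adding sufficiently many complete predecessor observations progressively concentrates the posterior on $\theta$, uniformly enough that the limit payoff attains $u(1,1,Q)-c$. A secondary subtlety is the self-consistency of the off-path punishment: the unanimous $a=0$ profile must itself be a stage-two best response even against a strongly informative signal, which can be guaranteed by enlarging $\hat{Q}(c)$ until Assumption 3's coordination-dominates-information inequality bites.
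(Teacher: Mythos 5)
Your construction is essentially the paper's: designate a single observer per community, sustain her costly observation by having the community coordinate on a ``good'' unanimous action when the prescribed neighborhood is observed and a ``bad'' unanimous action otherwise (both credible at stage two once $Q$ is large, by the unanimity/coordination lemma), and then invoke Proposition 2 via expanding observations. The differences in your version (observing the $K(t)$ most recent communities rather than just $a^{t-1}_1$, punishing with the constant action $0$ rather than the least-likely-correct action, not punishing superfluous observation) are cosmetic and harmless, although note that $K(t)$ is a model primitive you do not get to choose, and $K(t)\to\infty$ is not needed here — a single recent action suffices for expanding observations.

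The genuine gap is in step (iii). An equilibrium requires the observer's incentive constraint to hold at \emph{every} period $t$, but you verify it only in the limit: you compare the deviation payoff $\tfrac12[u(1,1,Q)+u(1,0,Q)]$ against the \emph{limiting} on-path payoff $u(1,1,Q)-c$. At $t=2$ the posterior after observing one action is far from degenerate, the on-path gross payoff is well below $u(1,1,Q)$, and with $c$ near your $\bar c=\tfrac12[u(1,1,1)-u(1,0,1)]$ the observer would strictly prefer to deviate, so the profile fails to be an equilibrium in early periods. The fix — and what the paper actually does — is to lower-bound the on-path payoff \emph{uniformly in $t$} by the payoff from following the signal alone, namely $\tfrac12\bigl[(F^{Q}_0(\hat s)+1-F^{Q}_1(\hat s))u(1,1,Q)+(F^{Q}_1(\hat s)+1-F^{Q}_0(\hat s))u(1,0,Q)\bigr]$ with $\hat s$ the density crossing point; this exceeds the punishment payoff by $\tfrac12(F^{Q}_0(\hat s)-F^{Q}_1(\hat s))(u(1,1,Q)-u(1,0,Q))>0$ at every $t$, and $\bar c$ is taken below the minimum of this quantity over the (finite) support of $G$. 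This also shows your flagged ``main obstacle'' — proving the on-path payoff converges to $u(1,1,Q)-c$ via the Theorem 1 machinery — is a red herring: no convergence is needed for the incentive check, only this crude uniform bound. A last small point: your claim that $\hat Q(c)$ can be ``chosen large enough that $\tfrac12[u(1,1,\hat Q(c))-u(1,0,\hat Q(c))]>c$'' is unsupported, since the assumptions do not make $u(1,1,Q)-u(1,0,Q)$ monotone in $Q$; the role of $\hat Q(c)$ is only to make the unanimous profiles credible, while the smallness of $c$ must do the rest.
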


Before elaborating on this result, I must first describe such an equilibrium that leads to asymptotic learning. For agent in the same community $N^t$, consider two action profiles: a ``truth-seeking'' profile in which agents conform to the action that matches the state with higher probability according to all available information and a sub-optimal profile in which they act otherwise--for example, they conform to the action that is the worst match for the state. The first profile clearly yields a higher payoff for every agent in expectation. Now consider the following strategy profile for observation and action: agent 1 observes a prescribed neighborhood given $s^t(Q^t)$, and no other agent observes. If the realized observed neighborhood is revealed to be the same as the prescribed one, then the agents follow the ``truth-seeking'' action profile; otherwise, they follow the sub-optimal profile.

When the community size is large, both action profiles constitute best responses, which then implies (by backward induction) that making the prescribed observation is indeed optimal for agent $1$. Hence, we have an equilibrium in which the observation of an arbitrary non-empty neighborhood occurs regardless of signal value. By imposing the property of expanding observations on this sequence of the observed neighborhood (for example, agent 1 in each period observes agent 1 in the previous period), we can apply Proposition 2 to obtain asymptotic learning.

This result identifies an effect on strategic observation that is imposed by coordination motives: more observation can be encouraged as the community size grows. In the equilibrium described above, by conforming to different actions according to the observed neighborhood, the agents essentially make it more costly for agent 1 not to observe, and thus, the range of signals for which agent 1 will observe the prescribed neighborhood expands. When the community size becomes sufficiently large, this signal range becomes the whole support $S$, and hence, an unbroken chain of observation is established even when observation is costly. As a result, the efficient aggregation of information is restored.

From the construction of equilibrium, we can also see that the result is robust to the specific cost structure of observation. In a more general model, let $c^t(k)$ denote a cost function for observing $k$ predecessors in period $t$. As long as $c^t(1)$ has a constant upper bound, Theorem 3 can be applied to show that asymptotic learning can occur in equilibrium.

\subsection{Bounded Private Beliefs}

When private beliefs are bounded and only a finite-size neighborhood can be observed at the limit, the level of social learning is always bounded away from 1 because of either herding or a persistent probability of error\footnote{This claim is valid for both exogenous and endogenous observation. Formal results can be found in Song\cite{Song}.}. Therefore, in this section, I assume that $\lim_{t\rightarrow\infty}K(t)=\infty$ to show a sharp contrast between learning with and without coordination motives.

As in the previous section, I first discuss the effect of endogenous observation on learning in an environment in which agents are singletons. The limit learning probability can be affected in either direction: whether it rises or falls compared with exogenous observation depends greatly on the value of $c$, the cost of observation. The following example illustrates this result and its underlying mechanism without loss of generality.

Assume that only one agent moves in each period. Consider the following two cases: exogenous observation where $B^t=\{1,\cdots,t-1\}$ and endogenous observation where $K(t)=t-1$. It is established in the literature (see, e.g., Smith and Sorensen\cite{SS}) that when observation is exogenous, the limit learning probability has an upper bound $\bar{P}<1$. In other words, at the limit, an agent behaves better than merely following her own signal, but she cannot learn the true state perfectly.

Under endogenous observation, Theorem 2 can be extended here to characterize the limit learning probability for a range of the cost $c$. Consider a symmetric signal structure. Note that unbounded private beliefs constitute a sufficient but not necessary condition for the proof of Theorem 1. In fact, truth-telling observation requires only that beliefs be ``strong'' relative to cost, i.e., $\lim_{s\rightarrow 1}\frac{f^1_1(s)}{f^1_1(s)+f^1_0(s)}\times u(1,1,1)+\lim_{s\rightarrow 1}\frac{f^1_0(s)}{f^1_1(s)+f^1_0(s)}\times u(1,0,1)>u(1,1,1)-c$. In other words, as long as an agent prefers not to observe--even if observation reveals the truth--when her signal takes the most extreme value, the necessary and sufficient relation between truth-telling observation and infinite observation at the limit can be derived following the same argument as used previously. Hence, when $c>\lim_{s\rightarrow 1}\frac{f^1_0(s)}{f^1_1(s)+f^1_0(s)}\times(u(1,1,1)-u(1,0,1))$, letting $s(c)$ be characterized by $\frac{f^1_1(s(c))}{f^1_1(s(c))+f^1_0(s(c))} u(1,1,1)+\frac{f^1_0(s(c))}{f^1_1(s(c))+f^1_0(s(c))} u(1,0,1)=u(1,1,1)-c$, we have an expression for the limit learning probability that is denoted $P(c)$:
\begin{align*}
P(c)=F^1_0(s(c)).
\end{align*}

Depending on $c$, the value of $s(c)$ ranges from $0$ to arbitrarily close to $1$. As a result, the value of $P(c)$ ranges from $F^1_0(0)$ to arbitrarily close to $1$. We see here that endogenous observation affects social learning in a way that is monotonic in $c$: compared with exogenous observation, endogenous observation is better for social learning when $c$ is relatively large and is worse for social learning when $c$ is relatively small.

I now present the main result for coordination motives. Regardless of the value of $c$, coordination motives facilitate learning in the sense that it increases the highest possible equilibrium learning probability.

\begin{thm}
There is a cutoff $\bar{c}>0$ such that for every $c\in(0,\bar{c})$, there exists $\hat{Q}(c)$ such that if $G(Q\geq \hat{Q}(c))=1$, for every $\epsilon>0$, there exists an equilibrium $\sigma^*$ where (1) truth-telling observation occurs and (2) $\lim_{t\rightarrow\infty}\mathcal{P}_{\sigma^*}(a_n^t=\theta)>1-\epsilon$.
\end{thm}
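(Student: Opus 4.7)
The plan is to combine the two engines used earlier in the paper: the signal-contingent coordination construction that makes Theorem 1 work under bounded beliefs, and the threat-based incentivization of observation that makes Theorem 3 work under endogenous observation. Because the hypothesis of the section is $\lim_{t\to\infty}K(t)=\infty$, I may freely assume capacities grow without bound; I will also choose the prescribed neighborhoods so that, viewed as a sequence, they have both expanding and infinite complete observations (for instance, in each period agent 1 observes agent 1 of the previous period together with everything that agent observed). The target equilibrium has truth-telling observation in the sense of Theorem 1, and the gap to $1$ in the limit learning probability will be controlled by a tunable cutoff in signal space.

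For each $t$, I propose the following strategy profile on $N^t$. Agent $1$ is designated the observer: for every realized $s^t(Q^t)$, she pays $c$ and selects the prescribed neighborhood $B^t$ (of size $K(t)$) that realizes the expanding/infinite-complete pattern; every other agent chooses $\varnothing$. After observation is publicly revealed within $N^t$, play branches on two signal regions, exactly as in Theorem 1: fix a cutoff $\bar s$ close to $1$; if $|s^t|>\bar s$, all agents coordinate on $\mathrm{sign}(s^t)$ (the signal-driven action); if $|s^t|\le \bar s$, they coordinate on the action that matches $\theta$ with higher posterior probability given $(s^t,h^t)$. Off path, if the publicly revealed observation is not the prescribed $B^t$, all agents switch to the worst-match action given the realized information (the sub-optimal coordinated profile used to construct the threat in Theorem 3). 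By Assumption 3 on $u$, once $Q\ge \hat Q(c)$ is large enough, every one of these unanimous profiles is a mutual best response, because the coordination term dominates the state-matching term; this verifies conditions (1) and (2) of the PBE definition for every agent $n\ge 2$. It also makes the sub-optimal off-path profile a credible continuation equilibrium.

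Having fixed continuation play, I check agent $1$'s observation decision and then extract the two conclusions. The on-path expected payoff of agent $1$ is approximately $u(1,1,Q^t)$ times the probability of matching $\theta$ under the mixed signal/observation rule, minus $c$; the payoff from deviating to $\varnothing$ is at most the off-path payoff, which by construction lies near $u(1,0,Q^t)$ because the sub-optimal profile is played. Since $u(1,1,Q^t)-u(1,0,Q^t)$ is a fixed positive constant that does not shrink with $Q$, choosing $\bar c$ below this gap guarantees that observation strictly dominates non-observation for every $c\in(0,\bar c)$ and every signal value. Thus the prescribed strategy is an equilibrium. For truth-telling observation, note that, conditional on any $s^t$ with $|s^t|>\bar s$, the entire community plays $\mathrm{sign}(s^t)$, so the action profile of $N^t$ carries the full informational content of $s^t$ with a fixed positive probability $\mathcal{P}(|s|>\bar s)>0$ independent of history. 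This is exactly the correlating-device property exploited in Theorem 1, and the expanding plus infinite complete observations assumptions on the prescribed sequence $\{B^t\}$ then yield $\lim_{t\to\infty}\mathcal{P}_{\sigma^*}(\hat a^t=\theta\mid B^t\neq\varnothing)=1$ by the same Bayesian-improvement argument sketched after Theorem 1. Finally, the unconditional limit learning probability equals the probability that signal-only play matches $\theta$ plus (in the limit) the probability of the observation-driven branch, which converges to $\mathcal{P}(|s|\le\bar s)+\mathcal{P}(|s|>\bar s, \mathrm{sign}(s)=\theta)$; letting $\bar s\to 1$ makes this exceed $1-\epsilon$, giving (2).

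The main obstacle, and the place where the bounded-belief case is harder than the unbounded one, is reconciling two pressures on the cutoff $\bar s$. To make the signal-only branch a mutual best response for large $Q$ I need the coordination term to swamp the fact that, under bounded beliefs, the posterior induced by $s$ alone is uniformly bounded away from $1$, so the signal-only action is genuinely worse than the observation-based one for small $|s|$; Assumption 3 gives exactly this domination but only for $Q$ beyond a threshold $\hat Q(c,\bar s)$, and the threshold may a priori blow up as $\bar s\to 1$. The delicate estimate is to show that $\hat Q(c,\bar s)$ stays finite uniformly in $\bar s$, or alternatively to quantify $\hat Q(c)$ before letting $\bar s\to 1$, and simultaneously to verify that the information-based payoff difference $u(1,1,Q)-u(1,0,Q)$ dominates $c$ for this same $Q$; this is where the cutoff $\bar c$ in the theorem is pinned down. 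Once this uniform bound is in hand, the rest of the argument is essentially bookkeeping around the Theorem 1 improvement estimate.
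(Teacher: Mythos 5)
Your construction is essentially the paper's own proof: a designated observer whose observation is enforced by the threat that the community coordinates on the worst-match action off path (the Theorem 3 device), combined on path with the signal-cutoff coordination profile from Theorem 1, with Lemma 1 guaranteeing that every unanimous profile is a mutual best response once $Q\geq\hat{Q}$. The ``delicate estimate'' you flag at the end is in fact a non-issue: the threshold in Lemma 1 is pinned down by the condition $u(1,0,Q)\geq u(1,1,1)$, which is uniform over all posteriors and hence over all cutoffs $\bar{s}$, so $\hat{Q}$ does not blow up as $\bar{s}\rightarrow 1$.
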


This result can be derived from a combination of Theorems 1 and 3. First, based on Theorem 3, agents can be incentivized to observe a prescribed neighborhood given any signal; then, according to Theorem 1, when the prescribed neighborhood is observed, the signal serves as a correlation device for the agents to coordinate on an action profile that accounts for all available information with a certain probability. This probability can be arbitrarily close to $1$. Consequently, for any fixed observation cost $c$, when the community size is large, there is always an equilibrium with a higher learning probability than is available for singleton communities.

\subsection{Summary}

Before discussing some extensions of the model, I briefly summarize the comparison across observation structures and community sizes in this section. To introduce a different and useful perspective for examining the impact of various factors on social learning, I categorize the main results here by signal structure and regard the case with exogenous observation and singleton communities as a benchmark.

When private beliefs are unbounded, in the benchmark case, the level of social learning depends entirely on the pattern of observation. Asymptotic learning occurs if and only if at the limit an agent almost surely observes a close predecessor (e.g., the ``complete'' network). The presence of coordination motives does \textit{not} change this property of learning. When observation becomes endogenous, asymptotic learning cannot be achieved because the positive observation cost prevents an agent from observing when her signal is strong. Imposing coordination motives now makes a difference in the sense that it encourages observation and thus restores asymptotic learning when the community size is sufficiently large. Figure 1 uses some representative observation structures to illustrate the learning pattern over time in different environments.

\begin{figure}[h]
\centering
\includegraphics[width=5in]{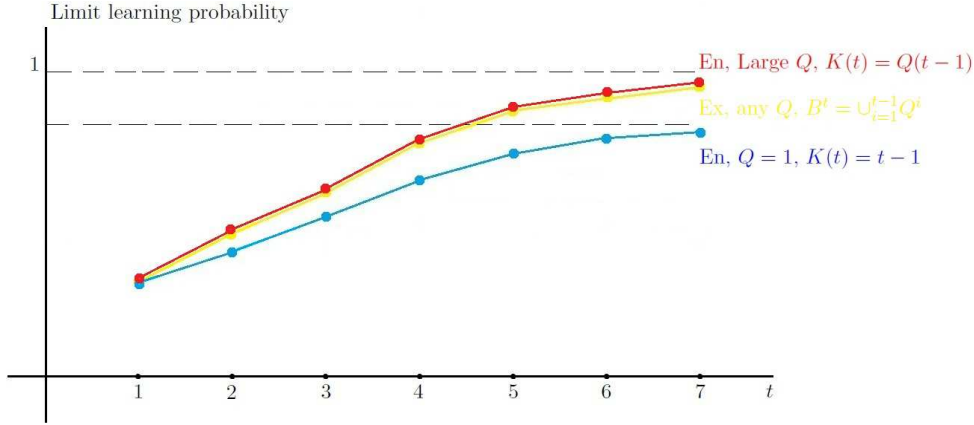}
\caption{Learning Patterns with Unbounded Private Beliefs}
\setlength{\abovecaptionskip}{0pt}
\caption*{(En = endogenous observation; Ex = exogenous observation)}
\end{figure}

When private beliefs are bounded, the benchmark case typically produces a learning probability bounded away from $1$, regardless of whether agents observe close or distant predecessors. Making observation endogenous can cause this probability to be either higher or lower depending on the observation cost $c$. With coordination motives, the highest possible learning probability increases for any value of $c$ when the community size is sufficiently large; in particular, it can be arbitrarily close to $1$ in equilibrium. Figure 2 illustrates these scenarios.

\begin{figure}[h]
\centering
\includegraphics[width=5in]{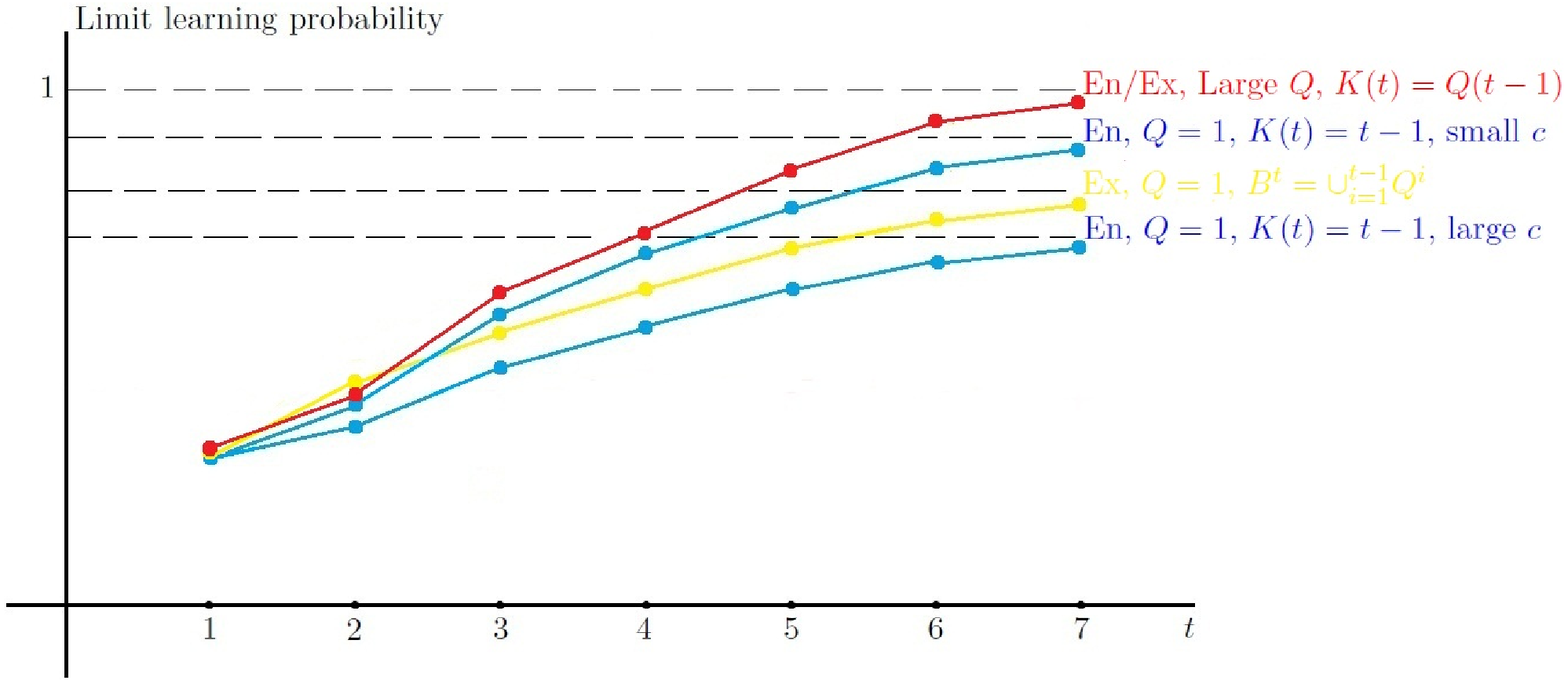}
\caption{Learning Patterns with Bounded Private Beliefs}
\setlength{\abovecaptionskip}{0pt}
\caption*{(En = endogenous observation; Ex = exogenous observation)}
\end{figure}

\section{Discussion}






\subsection{Equilibrium Selection and Risk Dominance}

The conforming incentive generated by coordination motives results in multiple equilibria in an environment with large communities. Many of my previous results are built on the fact that conforming based on the most informed action and on a less informed action are both optimal responses for agents in the same community, which does not occur when agents are singletons because one's unique best response would then be to use all available information. A natural question is then whether different equilibria can be compared in any way and, if so, whether a selected equilibrium by any criterion changes the implication of having coordination motives in the model. In this section, I propose \textit{risk dominance} as an equilibrium selection method and discuss its properties and impact. In particular, this criterion is imposed on the interim stage in which signal and observation have been realized: it essentially enables comparison between two action profiles and selects a unique equilibrium action for each information set.

Consider any $N^t$ with any community size $Q^t$ and any information set $I^t$. Let $a^t(I^t)=\{a^t_{n}(I^t)\}_{n=1}^{Q^t}$ and $a^{'t}(I^t)=\{a^{'t}_{n}(I^t)\}_{n=1}^{Q^t}$ denote two arbitrary action profiles with unanimous action. Let $v_n^t(a^t(I^t),I^t)$ denote agent $n$'s expected payoff given $a^t(I^t)$ and $I^t$.

\begin{defn}
I say that $a^t(I^t)$ \textbf{risk-dominates} $a^{'t}(I^t)$ if for every $N^{'t}\subset N^t$ and every $n\in N^{'t}$, we have
\begin{align*}
&v_n^t(a^t(I^t),I^t)-v_n^t((\{a_i^{'t}(I^t)\}_{i\in Q^{'t}},\{a_j^{t}(I^t)\}_{j\notin N^{'t}}),I^t)\\
\geq &v_n^t(a^{'t}(I^t),I^t)-v_n^t((\{a_i^{t}(I^t)\}_{i\in Q^{'t}},\{a_j^{'t}(I^t)\}_{j\notin N^{'t}}),I^t).
\end{align*}
If $a^t(I^t)$ risk-dominates every other action profile for every $I^t$, we say that $a^t(I^t)$ is \textbf{risk dominant}.
\end{defn}

The idea behind risk dominance is the following: suppose that a subset of agents $N^{'t}\subset N^t$ switches their action from a given profile to an alternative profile. If action profile 1 risk-dominates action profile 2, then the \textit{expected loss} for every agent in $N^{'t}$ in switching from action profile 1 to 2 is always larger than the loss involved in switching from 2 to 1, for every possible $N^{'t}$ and information set $I^t$. One interpretation of risk dominance is that it indicates an agent's preference for one action profile over the other when she is uncertain about which will be used by others in her community.

An intuitive candidate for a risk-dominant action profile is one in which each agent makes the best use of $I^t$, for which I provide a formal definition below. This is actually the generically unique risk-dominant action profile.
\begin{defn}
An action profile is \textbf{truth-seeking} if for every $t,n$ and $I^t$, $n$ chooses the action $a_n^t$ that maximizes the probability of $a_n^t=\theta$ given $I^t$.
\end{defn}

\begin{prop}
The truth-seeking action profile is risk dominant.
\end{prop}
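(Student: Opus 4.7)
The plan is to reduce the risk-dominance condition to a single scalar inequality in the posterior probability $p = P(\theta = 1 \mid I^t)$ and then verify it directly from the payoff assumptions. First I would use the hypothesis in the definition that the two profiles being compared are both unanimous. If $a^t$ and $a'^t$ are both unanimous and distinct, they must assign opposite actions; without loss of generality take $\hat{a}^t = 1$ as the truth-seeking action, so $a^t_n = 1$ and $a'^t_n = 0$ for every $n \in N^t$. Truth-seeking is then equivalent to $p \geq 1/2$.

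Next I would introduce the shorthand $g(m) := u(1,1,m) = u(0,0,m)$ and $b(m) := u(1,0,m) = u(0,1,m)$, which is well-defined by payoff assumption 2 and satisfies $g(m) > b(m)$ for every $m$. Fix an arbitrary $N^{'t} \subset N^t$ with $|N^{'t}| = q$, let $Q = |N^t|$, and take any $n \in N^{'t}$. Because an agent who sides with $N^{'t}$ in a ``swapped'' profile has exactly $q$ peers (including herself) taking her action, the four expected payoffs entering the risk-dominance inequality are
\begin{align*}
v_n^t(a^t, I^t) &= p\,g(Q) + (1-p)\,b(Q),\\
v_n^t\bigl((a'^t_{N^{'t}}, a^t_{N^t\setminus N^{'t}}), I^t\bigr) &= (1-p)\,g(q) + p\,b(q),\\
v_n^t(a'^t, I^t) &= (1-p)\,g(Q) + p\,b(Q),\\
v_n^t\bigl((a^t_{N^{'t}}, a'^t_{N^t\setminus N^{'t}}), I^t\bigr) &= p\,g(q) + (1-p)\,b(q).
\end{align*}
Substituting and collecting terms, both sides of the risk-dominance inequality collapse to the single condition
\begin{align*}
(2p - 1)\bigl[g(Q) + g(q) - b(Q) - b(q)\bigr] \geq 0.
\end{align*}

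Since $p \geq 1/2$ by truth-seeking and $g(m) > b(m)$ for every $m$, both factors are nonnegative, so the inequality holds for every admissible $N^{'t}$ and every $n \in N^{'t}$; when $p > 1/2$ strictly (the generic case), the inequality is strict, so the truth-seeking profile is in fact the \emph{unique} risk-dominant profile at $I^t$. There is no substantive obstacle here: the argument is essentially bookkeeping, and the only care needed is in indexing the mixed profiles so that $n \in N^{'t}$ is always counted among the $q$ agents playing the ``minority'' action in each swapped configuration, after which the factorization above falls out immediately and the payoff assumption $g > b$ closes the proof.
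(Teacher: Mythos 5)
Your proposal is correct and follows essentially the same route as the paper's own proof: reduce to opposite unanimous profiles, write out the four expected payoffs with the deviating agent counted among the $q$ (the paper's $Q'$) minority players, and collapse the risk-dominance condition to $(2p-1)\bigl[u(1,1,Q)-u(1,0,Q)+u(1,1,q)-u(1,0,q)\bigr]\geq 0$, which holds since $p\geq\tfrac12$ and matching the state is always preferred. The only additions are notational ($g,b$ shorthand) and the explicit remark on strictness when $p>\tfrac12$, which matches the paper's claim of generic uniqueness.
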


It is easy to see that the truth-seeking action profile yields the highest possible expected payoff for every agent in $N^t$ given $I^t$. In fact, if we fix the number of agents that take a given action $a$, each agent's payoff is the highest when $a$ is truth-seeking. Hence, given a subset $N^{'t}$ of action-switching agents, their loss is always less when switching from some other action profile to the truth-seeking profile rather than the opposite. I name the equilibrium with the truth-seeking profile as a \textit{truth-seeking equilibrium} and denote it as $\phi^*_T$ and $\sigma^*_T$ under exogenous and endogenous observation, respectively. I then inspect how learning in this particular equilibrium changes according to the community size.

When observation is exogenous, coordination motives have \textit{no effect} on the general learning pattern in a truth-seeking equilibrium: regardless of $G$, asymptotic learning occurs if private beliefs are unbounded and if the observation structure has expanding observations and does not occur otherwise\footnote{To be precise, it has been proven that asymptotic learning does not occur when the observation structure does not have expanding observations or when private beliefs are bounded and the observation structure takes several typical forms. For a more specific account, see, e.g., Acemoglu et al.\cite{ADLO} and Song\cite{Song}.}. The truth-seeking action profile prevents agents from conforming to a less informed action that uses more of their private information, and hence, the whole community acts as a single agent that makes her best effort to match her action with the true state. The conforming incentive does not alter anything in the agents' behavior regardless of how large a community becomes.

When observation is endogenous, however, coordination motives still play an important role on learning in a truth-seeking equilibrium. For example, consider the following scenario. Suppose that the signal distribution is the same for every $Q$, i.e., $f_a^Q=f_a$, $a=0,1$. $\{f_0,f_1\}$ is symmetric with unbounded private beliefs. In addition, the agents have infinite observations ($\lim_{t\rightarrow\infty}K(t)=\infty$).

\begin{prop}
For every $\sigma^*_T$, we have $\lim_{t\rightarrow\infty}\mathcal{P}_{\sigma^*_T}(a_n^t=\theta)=\mathbb{E}_G[F_0(s^*(Q))]$, where $s^*(Q)$ is characterized by the following equation:
\begin{align*}
\frac{f_1(s^*(Q))}{f_0(s^*(Q))+f_1(s^*(Q))}u(1,1,Q)+\frac{f_0(s^*(Q))}{f_0(s^*(Q))+f_1(s^*(Q))}u(1,0,Q)=u(1,1,Q)-c.
\end{align*}
\end{prop}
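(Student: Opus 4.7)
The plan is to adapt the argument of Theorem 2 (the singleton-community case) to truth-seeking equilibria under a general community size distribution $G$, using Proposition 1 to reduce each community's collective choice to that of a single representative Bayesian agent and then analyzing the one remaining strategic element: the free-riding game that determines who within a community pays the cost $c$ to observe.

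The first step is to show that when $\lim_{t\to\infty} K(t) = \infty$, any positive-measure observation by a single agent in period $t$ is asymptotically truth-revealing. By Proposition 1 the sequence of unanimous community actions $\{a^t\}$ is well-defined, and under the truth-seeking profile each $a^t$ is Bayesian-optimal against the community's pooled information $I^t$. The strict-improvement argument underlying Theorem 2 (the posterior on $\theta$ after observing ever-larger neighborhoods cannot remain bounded away from $1$ without violating the martingale property of Bayes' updating) then carries over, so a sole observer's posterior converges to a point mass at the truth. In particular, whenever at least one member of $N^t$ observes, $a^t = \theta$ with probability tending to $1$.

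Given Step 1, observation is a public good within the community: a would-be observer collects $u(1,1,Q) - c$ if she observes (thereby inducing the unanimous correct action), while a free-rider collects $u(1,1,Q)$. Her only meaningful alternative is to forgo observation when no other agent plans to, in which case the community plays the signal-only unanimous action and her expected payoff becomes $\frac{f_1(s)}{f_0(s)+f_1(s)} u(1,1,Q) + \frac{f_0(s)}{f_0(s)+f_1(s)} u(1,0,Q)$ for $s \geq 0$ (and the mirror for $s \leq 0$). Setting the two payoffs equal reproduces the defining equation of $s^*(Q)$ in the proposition; unbounded private beliefs and $c > 0$ ensure $s^*(Q) \in (0, 1)$. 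In any pure-strategy truth-seeking equilibrium, precisely one agent observes whenever $|s| \leq s^*(Q)$ and no one does otherwise. Conditional on $\theta = 0$ the signal-only action matches $\theta$ for $s \in (-1, -s^*(Q))$, the observation-based action matches $\theta$ for $s \in [-s^*(Q), s^*(Q)]$, and the signal-only action equals $1 \neq \theta$ for $s \in (s^*(Q), 1)$. Summing contributions gives $F_0(s^*(Q))$; the symmetry $f_0(s) = f_1(-s)$ delivers the same value conditional on $\theta = 1$; averaging over $Q \sim G$ then yields $\mathbb{E}_G[F_0(s^*(Q))]$.

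The main obstacle is Step 1: extending the truth-telling observation argument from the singleton case to endogenous observation with non-singleton communities. Although each community acts unanimously, its action encodes a signal-dependent observation decision by its own members, so the informativeness of each observed predecessor is itself endogenous. I expect the proof to go through via the same strict-improvement argument as in Theorem 1 — counting the probability that an additional block of observed actions is \emph{helpful} versus \emph{harmful} — but the conditioning on the observation regime of each predecessor community requires careful bookkeeping that is absent in the singleton case.
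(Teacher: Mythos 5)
Your proposal is correct and follows essentially the same route as the paper: truth-telling observation (the paper obtains it by citing Theorem 2 through exactly the representative-agent reduction you describe via Proposition 1), the observe-versus-follow-signal indifference condition pinning down $s^*(Q)$, and the computation of $F_0(s^*(Q))$ averaged over $G$. The ``main obstacle'' you flag in Step 1 is in fact dispatched immediately once that reduction is made---under the truth-seeking profile each community with its pooled signal and shared observation behaves as a single Bayesian observer, so the singleton argument of Theorem 2 applies verbatim and no extra bookkeeping over predecessors' observation regimes is required.
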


Given the probability distribution of community size $G$, let $P_T(G)$ denote the probability of the correct action occurring at the limit in any truth-seeking equilibrium. We have the following corollary:

\begin{cor}
For $G',G$ such that $G'$ first-order stochastically dominates $G$, $P_T(G')>P_T(G)$.
\end{cor}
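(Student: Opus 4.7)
The plan is to reduce the corollary to the strict monotonicity of $\psi(Q) := F_0(s^*(Q))$ in $Q$ and then invoke the standard characterization of first-order stochastic dominance. Proposition 4 gives $P_T(G) = \mathbb{E}_G[F_0(s^*(Q))]$, so if $\psi$ is strictly increasing on the joint support of $G$ and $G'$, the classical fact that $\mathbb{E}_{G'}[\psi(Q)] > \mathbb{E}_G[\psi(Q)]$ whenever $G'$ first-order stochastically dominates $G$ (and differs from $G$) for strictly increasing bounded $\psi$ delivers the conclusion at once.

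To establish that $\psi$ is strictly increasing, I would start by rewriting the indifference equation from Proposition 4 in terms of the private belief $p(s) := f_1(s)/(f_0(s)+f_1(s))$. Expanding and collecting terms yields
\begin{equation*}
\bigl(1 - p(s^*(Q))\bigr)\bigl(u(1,1,Q) - u(1,0,Q)\bigr) \;=\; c,
\end{equation*}
so
\begin{equation*}
p(s^*(Q)) \;=\; 1 - \frac{c}{u(1,1,Q) - u(1,0,Q)}.
\end{equation*}
I would then argue that the coordination-motive premium $u(1,1,Q) - u(1,0,Q)$ is strictly increasing in $Q$, so that $p(s^*(Q))$ strictly increases in $Q$. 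Combined with MLRP, which makes $p(\cdot)$ a strictly increasing function of $s$, this shows $s^*(Q)$ is strictly increasing in $Q$. Because $F_0$ is itself strictly increasing on the support, $\psi(Q) = F_0(s^*(Q))$ inherits strict monotonicity, completing the reduction.

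The main obstacle is justifying the strict monotonicity of the gap $u(1,1,Q) - u(1,0,Q)$ in $Q$. Assumptions 1--3 on $u$ only guarantee that each of $u(1,1,\cdot)$ and $u(1,0,\cdot)$ is weakly increasing in $Q$, that the former strictly dominates the latter for each fixed $Q$, and that coordination motives eventually dominate informational ones in absolute terms. The strict increase of the \emph{difference} is a supermodularity-type property linking state-matching value to community size, which this section implicitly assumes when framing ``stronger coordination motives'' as raising the incentive to acquire a truth-revealing observation; once granted, the rest of the argument reduces to the short algebraic manipulation above and the FOSD-expectation lemma.
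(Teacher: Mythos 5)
Your proposal follows the same route as the paper: Proposition 4 gives $P_T(G)=\mathbb{E}_G[F_0(s^*(Q))]$, one argues that $F_0(s^*(Q))$ is increasing in $Q$, and the conclusion follows from the standard expectation characterization of first-order stochastic dominance. The difference is that the paper's proof is a one-line assertion (``Note that $F_0(s^*(Q))$ is increasing in $Q$''), whereas you actually unpack the indifference condition to obtain $p(s^*(Q))=1-c/\bigl(u(1,1,Q)-u(1,0,Q)\bigr)$ and trace the monotonicity back, via MLRP and the strict monotonicity of $F_0$, to the single requirement that the gap $u(1,1,Q)-u(1,0,Q)$ be strictly increasing in $Q$. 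You are right that this supermodularity property is not implied by the three assumptions on $u$ in Section 3.2 (e.g.\ $u(1,1,m)=2+m$, $u(1,0,m)=1+m$ satisfies all three but makes the gap constant, so $s^*(Q)$ and hence $P_T$ would not vary with $G$ and the strict inequality would fail). So the ``obstacle'' you flag is not a defect of your argument relative to the paper's: it is an implicit assumption that the paper itself relies on, both here and in the surrounding discussion of why ``observation generates greater expected benefit in a larger community.'' Your version makes the dependence explicit, which is an improvement; to be fully rigorous either proof needs the increasing-gap condition added to the maintained assumptions on $u$ (and, for strictness, $G'\neq G$ on the relevant support).
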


In a truth-seeking equilibrium, coordination motives still encourage agents to observe--not because no observation or a ``wrong'' observation entails that they conform to a sub-optimal action as in the previous results but because observation generates greater expected benefit in a larger community. As a result, the range of signals leading to non-empty observation increases while the truth-telling property of observation is preserved. Therefore, a larger community size increases the limit learning probability, while the incremental improvement decreases in $\sigma^*_T$ compared with the constructed equilibrium in Section 5 because asymptotic learning does not occur.

When private beliefs are bounded, coordination motives can work in opposite directions. As argued in Section 5, truth-telling observation occurs in $\sigma^*_T$ whenever private beliefs are ``strong'' relative to cost, i.e., when the payoff of simply following an extreme signal exceeds that of knowing the true state by costly observation. Similar to the above proposition, the first payoff can be written as $\frac{f_1(s^*)}{f_0(s^*)+f_1(s^*)}u(1,1,Q)+\frac{f_0(s^*)}{f_0(s^*)+f_1(s^*)}u(1,0,Q)$, while the second payoff is $u(1,1,Q)-c$, which implies that the marginal effect of increasing $Q$ is higher in the latter. We can then conclude that increasing the likelihood of a larger community size is better for social learning when private beliefs remain ``strong'' because, once again, it encourages observation that is still truth-telling. However, social learning is harmed when private beliefs become weak because the informativeness of observation may overwhelm that of any private signal and induce herding.

\subsection{Herding with Imperfect Knowledge of Signals}

The previous analysis has shown that coordination motives can reduce herding and facilitate efficient aggregation of private information. An important assumption leading to this result is that all information -- both private signals and observations -- is shared within community and can be used as a correlating device. In this section, I show that when private signals are ``private'' in the most strict sense and cannot be shared, coordination motives have exactly the opposite effect: herding occurs even when private beliefs are unbounded.

Assume that in every community $N^t$, each agent $n$ receives a private signal $s_n^t$ that cannot be observed by \textit{any} other agent. Without loss of generality, assume that $s_n^t$ follows independent and identical distribution $\{F_0,F_1\}$ conditional on the state, for every $t$ and $n$. Consider an exogenous ``line'' network, i.e. $B^t=N^{t-1}$. In every symmetric equilibrium, i.e. equilibrium where agents in the same community use identical strategies, I find the following property:

\begin{prop}
There exists $\hat{Q}$ such that if $G(Q\geq \hat{Q})=1$, asymptotic learning never occurs in any equilibrium.
\end{prop}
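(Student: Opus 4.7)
The plan is to show that in any symmetric equilibrium, the public posterior $r_t = P(\theta = 1 \mid B^t)$ gets absorbed in a coordination-induced herd region on either side of $1/2$, and that this absorption occurs on the wrong side with positive probability even under the true state. This recreates the classical herding phenomenon at the community level and rules out asymptotic learning.

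First, I would characterize single-period symmetric equilibria. Fixing $r_t$, I analyze the coordination subgame within $N^t$. For $r_t$ near $1/2$, the private-belief function $q(s, r_t) = r_t f_1(s)/((1-r_t)f_0(s) + r_t f_1(s))$ varies substantially with the signal $s$, and a standard fixed-point argument yields a threshold equilibrium $\phi^t(s) = \mathbf{1}\{s > s^*(r_t)\}$. However, as $r_t \to 1$ (symmetrically $r_t \to 0$), $q(s, r_t) \to 1$ uniformly in $s$, so the payoff difference $V_1(s) - V_0(s)$ becomes essentially independent of $s$ and, by property 3, strictly favors the posterior-dominant action. Consequently there exist $\underline{r} > 0$ and $\overline{r} < 1$ such that for $r_t \in [0, \underline{r}) \cup (\overline{r}, 1]$ the only symmetric equilibrium is unanimous play, which is independent of $\theta$ and thus uninformative; hence $r_{t+1} = r_t$ in this absorbing herd region.

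Second, in the non-herd region, $N^t$'s action profile reveals an empirical fraction concentrated around $1 - F_\theta(s^*(r_t))$ with variance of order $1/Q^t$. Since the support of $G$ is finite, $Q^t$ is uniformly bounded, and so is the magnitude of the Bayesian update $|r_{t+1} - r_t|$. By the martingale convergence theorem applied to the bounded martingale $\{r_t\}$ under the unconditional law, $r_t \to r_\infty$ almost surely, with $r_\infty$ supported on the herd region $[0, \underline{r}] \cup [\overline{r}, 1]$ (since the update magnitude cannot leap the region). A bounded-increment random-walk argument starting from $r_1 = 1/2$ then yields $P(r_\infty \leq \underline{r} \mid \theta = 1) > 0$ and $P(r_\infty \geq \overline{r} \mid \theta = 0) > 0$: the wrong herd is entered with positive probability under each state. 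Since every agent plays the herd action in the herd region, this forces $\lim_{t \to \infty} P(a_n^t = \theta) < 1$.

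The main obstacle is rigorously identifying the unanimous-only herd region, i.e., proving that for $r_t$ sufficiently close to $0$ or $1$, no non-degenerate threshold equilibrium survives. Unlike the classical single-agent BHW setting, the argument here must simultaneously handle the marginal agent's private signal, the coordination incentive induced by the anticipated community behavior via property 3, and the multiplicity of coordination equilibria coexisting at each posterior. Quantifying how the coordination weight overwhelms the signal-based belief shift as $r_t$ approaches the extremes, uniformly in $Q \geq \hat{Q}$, is the technical crux of the proof.
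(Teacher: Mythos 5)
Your route (absorb the public belief into an uninformative herd region and invoke martingale convergence) is genuinely different from the paper's, but as written it has two concrete gaps. First, your identification of the herd region rests on the claim that $q(s,r_t)\rightarrow 1$ \emph{uniformly in $s$} as $r_t\rightarrow 1$. That is false precisely in the case the proposition is meant to cover: with unbounded private beliefs the likelihood ratio $f_1(s)/f_0(s)$ tends to $0$ as $s\rightarrow -1$, so for every $r_t<1$ there remain signals whose private belief overwhelms the public belief, and no $\underline{r},\overline{r}$ exist by the posterior-swamping logic. The force that actually produces conformity here is not the extremity of the public belief but the coordination motive combined with the \emph{independence} of peers' signals: an agent who anticipates that each peer will, with probability close to one under either state, follow the commonly observed action must conform herself by property 3 of $u$, no matter how strong her own signal is. Your sketch gestures at property 3 but routes the argument through the wrong channel. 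Second, in the line network the belief $r_t=P(\theta=1\mid B^t)$ with $B^t=N^{t-1}$ is \emph{not} a martingale: community $t+1$ observes only $N^t$'s actions and not the full history, so the filtrations $\mathcal{F}_t=\sigma(a^{t-1})$ are not nested and the tower property fails. Martingale convergence, and the subsequent overshooting argument giving positive probability of the wrong herd, cannot be invoked without a substitute convergence result for this non-nested observation structure.

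The paper's proof avoids all of this by arguing directly by contradiction, with no equilibrium characterization and no limit theorem. Assume asymptotic learning; then there is a finite $\hat{t}$ such that, were the state $0$, every agent in $N^{\hat{t}}$ would choose $0$ with probability at least $1-\epsilon_1$, so the unanimous-$0$ profile in $N^{\hat{t}}$ occurs with probability at least $(1-\epsilon_1)^{Q^{\hat{t}}}$ under state $0$ and, since $\hat{t}$ is finite and no signal is fully revealing, with probability bounded away from $0$ under the true state $1$ as well. Conditional on observing that profile, each agent in $N^{\hat{t}+1}$ expects every peer to choose $0$ with probability at least $(1-\epsilon_2)^{Q^{\hat{t}+1}-1}$ (here conditional independence of signals is used), and for $Q^{\hat{t}+1}\geq\hat{Q}$ the payoff bound $(1-\epsilon_2)^{Q^{\hat{t}+1}-1}u(1,0,Q^{\hat{t}+1})$ from conforming exceeds the best case from deviating, so she chooses $0$ regardless of her signal. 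The herd then propagates to all later periods, contradicting asymptotic learning. If you want to salvage your construction, the herd region must be derived from this coordination-unraveling argument (no interior symmetric threshold survives once peers' play is nearly state-independent) rather than from uniform posterior dominance, and the convergence step needs to be reworked for the non-nested line-network filtration.
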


This result stands in stark contrast to the case of signal sharing, that there exists an equilibrium with asymptotic learning whenever private beliefs are unbounded and some close predecessors are observed. The reason behind is that keeping signals private brings potential uncertainty about other agents' actions. To achieve asymptotic learning, an agent's action must at least \textit{sometimes} contradict her observation (presumably when her signal is rather strong). However, when a highly informative observation realizes, since the observation is shared within the community and signals are independent, the agent can infer that no matter what the true state is, the other agents in her community will probably take the action that matches the observation. The existence of coordination motives then determines that her optimal action is to follow the observation as well. As a result, herding occurs with positive probability in every equilibrium even when private beliefs are unbounded.

Herding in this environment highlights the importance of \textit{common beliefs} on asserting the impact of coordination motives. When private signals are shared, agents have identical beliefs which make it possible for them to coordinate on either action in equilibrium when their community is large. As a result, they may agree to rely on private information only even though their observation is rather informative already. However, once each private signal is observed by the corresponding agent only, beliefs become heterogeneous across agents. The impossibility to ascertain one another's private signal incentivizes every agent to choose the ``safer'' option of following the commonly known observation, which then leads to herding.

\subsection{Separation Motives}

Thus far, I have focused on scenarios in which agents are willing to conform to a unanimous action. However, in some cases, there may be a ``congestion effect'' on action, i.e., more agents choosing the same action results in a smaller payoff for each agent. For instance, too many customers squeezing into a restaurant will probably result in a negative dining experience in terms of waiting time and noise level even if the restaurant is superior to its competitors in food quality. Consequently, a customer may actually prefer another restaurant with ordinary food but smaller crowds.

In this section, I show how the model developed above can be used to analyze this opposite environment with separation motives and its impact on learning. Consider the payoff function $u(\theta,a_n^t,m)$, and replace the second assumption on $u$ with the assumption that $u$ is decreasing in $m$.

Assume that observation is exogenous. For community $N^t$, let $P$ denote an arbitrary posterior probability that the true state is $1$ given their signal and observation. The following result describes the pattern of equilibrium behavior:

\begin{prop}
If $P>\frac{1}{2}$, in every equilibrium $\phi^*$, at least half of the agents in $N^t$ choose action $1$.
\end{prop}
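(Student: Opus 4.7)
The plan is a direct contradiction argument at a single period $t$. Suppose in some equilibrium $\phi^*$ that the number $k$ of agents in $N^t$ playing action $1$ is strictly less than $Q^t/2$, and set $m_0 = Q^t - k$, so that under the hypothesis $m_0 \geq k+1$. Because the private signal and the observed action sequence are common knowledge within $N^t$, every agent has the same posterior $P = \Pr(\theta = 1 \mid I^t) > 1/2$. I will show that any single agent currently playing $0$ would strictly prefer to deviate to $1$, contradicting the best-response condition and yielding $k \geq Q^t/2$.

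The key step combines the payoff symmetry $u(0,0,m) = u(1,1,m) =: A(m)$ and $u(1,0,m) = u(0,1,m) =: B(m)$, with $A(m) > B(m)$ at every $m$, with the separation-motive monotonicity. A typical agent's expected payoff from playing $0$ with $m_0$ conformers equals $(1-P)A(m_0) + P B(m_0)$, while deviating to $1$ yields $P A(k+1) + (1-P) B(k+1)$, since the conformer count for action $1$ rises from $k$ to $k+1$ once she switches. The separation-motive assumption that $u$ is decreasing in $m$, applied at $m_0 \geq k+1$, delivers the two inequalities $A(m_0) \leq A(k+1)$ and $B(m_0) \leq B(k+1)$.

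Setting $x = A(k+1) - B(k+1) > 0$, these bounds translate into $B(m_0) - A(k+1) \leq -x$ and $A(m_0) - B(k+1) \leq x$, so the difference between staying and deviating satisfies
\begin{align*}
\Delta = P[B(m_0) - A(k+1)] + (1-P)[A(m_0) - B(k+1)] \leq P(-x) + (1-P)x = (1-2P)x < 0,
\end{align*}
where the last strict inequality uses $P > 1/2$. Hence the agent in question strictly prefers to switch to action $1$, contradicting the equilibrium optimality of her action, and I conclude $k \geq Q^t/2$. I anticipate no serious obstacle: the argument is a pointwise deviation check, and none of the earlier results (notably Proposition 1, which was derived under coordination motives with $u$ \emph{increasing} in $m$) are invoked or applicable here—the opposite monotonicity is exactly what fixes the sign of $\Delta$. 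The one point requiring care is that the final inequality is \emph{strict}; this strictness comes from $x > 0$, which in turn relies on the strict form of the correct-action preference ($u(0,0,m) > u(1,0,m)$) built into assumption 2, without which an agent could be indifferent and the conclusion would weaken accordingly.
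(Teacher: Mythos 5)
Your proof is correct and follows essentially the same route as the paper's: both rest on the no-deviation condition for an agent currently playing $0$, combined with the reversed monotonicity of $u$ in $m$, the strict preference $u(1,1,\cdot)>u(1,0,\cdot)$, and $P>\frac{1}{2}$. Your direct bound $\Delta\leq(1-2P)\left(A(k+1)-B(k+1)\right)<0$ is simply a cleaner (and marginally sharper, since it also excludes the split $Q^t_1=(Q^t-1)/2$ for odd $Q^t$) packaging of the paper's sign-based case analysis of the same deviation inequality.
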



This result asserts that regardless of community size, at least half of the agents will always take the more informed action. Here, separation motives work exactly the opposite of coordination motives: instead of urging agents to conform to an action taken by the majority, separation motives divide agents into two groups, always with more agents in the group representing better information. This situation results from the trade-off between a less crowded group and a more informative action. Moreover, it can be shown that for some classes of utility function (e.g., $u$ is linear in $\frac{1}{m}$), as the community size rises, one can make increasingly precise inferences on the agents' posterior belief from observing all actions in the community. Then, if observation is exogenous and more or less ``complete'', i.e., at least at the limit, an agent observes almost the entire action history, the learning pattern is similar to that with singleton communities. Asymptotic learning occurs when private beliefs are unbounded but never occurs otherwise.

When observation is endogenous, a natural conjecture is that negative externalities discourage observation, and this possibility is confirmed by the model. As the community size increases, the marginal benefit from observation decreases because the equilibrium actions are always split in certain proportions between $0$ and $1$. Hence, although truth-telling observation still occurs if infinite observations can be made at the limit, the range of signals under which observation is non-empty is narrowed by separation motives. Moreover, a ``tragedy of commons'' argument implies that more precise knowledge about the true state may actually decrease the total payoff in a community, and hence, the issue of discrepancy between equilibrium and efficiency arises, but I will not discuss this issue further in this paper.

\section{Conclusion}

In this paper, I studied the problem of Bayesian learning with coordination motives in various signal and observation structures. A large and growing body of literature on social learning focuses on whether equilibria lead to efficient information aggregation, but most studies assume exogenous observation and no coordination motives. In many relevant situations, these two assumptions are overly simple. Individuals sometimes obtain their information not by some exogenous stochastic process but as a result of strategic choices. In addition, their payoffs may be directly affected by the actions of other individuals. This raises the questions of how different combinations of factors influence learning differently, under what circumstance asymptotic learning can be achieved, and how the results compare with benchmark cases studied in the literature.

To address these questions, I formulated a sequential-move learning model that incorporates all these elements. The basic decision sequence of the model follows the convention of Bikhchandani, Hirshleifer and Welch\cite{BHW}, Smith and Sorensen\cite{SS} and Acemoglu et al.\cite{ADLO}: on a discrete time line, a signal about the underlying binary state is realized at the beginning of every period and is observed by each agent in that period only. Each agent takes a binary action at the end of their period, and meanwhile, she can observe some of her predecessors' actions that are potentially informative. Nevertheless, my model differs from that used in most research in two fundamental aspects. First, in the literature, there is usually only one agent in each period, whereas in my model, there is a community consisting of multiple agents. Within a community, agents share their information (reflected by the signal) and observation and take their actions simultaneously. Also in contrast to the literature, in which each agent's sole objective is to match her action with the state, an agent's payoff from a given action is determined by both the state and the number of others in her community that take the same action. Second, observation is assumed to be exogenously given in much of the literature, whereas in this paper, I also analyze the case in which each agent can pay a cost to strategically choose a subset of her predecessors to observe.

I characterized pure-strategy (perfect Bayesian) equilibria for each observation structure (exogenous and endogenous) and characterized the conditions under which asymptotic learning can be obtained or approximated. When observation is exogenous, asymptotic learning occurs if private beliefs are unbounded and observations are ``expanding'', i.e., every observed neighborhood contains the action of some close predecessor over time. This result holds regardless of the community size. If private beliefs are bounded, for most common observation schemes, the probability of learning is bounded away from $1$ when the community size is small, but it can become arbitrarily close to asymptotic learning when the community size is larger than a certain threshold. Coordination motives reduce herding and improve social learning in this case.

When observation is endogenous, coordination motives also help to achieve better social learning but in a very different way. With a small community size, asymptotic learning never occurs because agents do not always observe: when the private signal is strong, it is not worthwhile to pay the observation cost for a small marginal expected benefit. However, when the community size becomes large, coordination motives encourage observation even when the private signal is strong because the marginal benefit from observation increases with the number of agents in a community. Therefore, asymptotic learning (or nearly asymptotic learning) occurs even when observation is costly.

I also discussed the issue of equilibrium selection and proposed risk dominance as a selection criterion for the action profile after both signal and observation are realized. In the selected equilibria, coordination motives do not affect learning at all when observation is exogenous, have a positive effect on learning when observation is endogenous and private beliefs are unbounded, and may either positively or negatively influence learning when observation is endogenous and private beliefs are bounded.

Beyond the specific results presented in this paper, I believe that the framework developed here can be applied to analyze learning dynamics in a more general and complex environment. The following questions are among those that can be studied in future work using this framework: (1) equilibrium learning when agents' preferences are heterogeneous, both over time and within a community; (2) the effect of coordination motives when agents in the same community make sequential decisions; and (3) equilibrium learning when the strength of coordination motives depends on the true state.

\newpage
\appendix
\section*{APPENDIX}

\begin{proof}[Proof of Proposition 1]
Suppose that there exist some $\sigma^*$ and $I^t$ such that in $N^t$, $Q'\in(1,Q^t)$ agents choose action $1$ and the others choose action $0$ in equilibrium. Let $P=\mathcal{P}_{\sigma}(\theta=1|I^t)$; for every agent that chooses action $1$, we have
\begin{align*}
Pu(1,1,Q')+(1-P)u(1,0,Q')\geq Pu(1,0,Q^t-Q'+1)+(1-P)u(1,1,Q^t-Q'+1).
\end{align*}
For every agent that chooses action $0$, we have
\begin{align*}
Pu(1,0,Q^t-Q')+(1-P)u(1,1,Q^t-Q')\geq Pu(1,1,Q'+1)+(1-P)u(1,0,Q'+1).
\end{align*}
Combining the inequalities yields
\begin{align*}
Pu(1,1,Q')+(1-P)u(1,0,Q')\geq Pu(1,1,Q'+1)+(1-P)u(1,0,Q'+1),
\end{align*}
which is a contradiction.
\end{proof}

\begin{proof}[Proof of Theorem 1]

I prove the result by establishing several lemmas.

\begin{lem}
There exists $\hat{Q}$ such that for any $Q^t\geq \hat{Q}$ and for any $I^t$, every action profile with unanimous action constitutes mutual best responses in $N^t$.
\end{lem}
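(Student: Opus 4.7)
The plan is to derive this lemma as a direct consequence of Assumption 3 on the payoff function $u$. I would fix any $I^t$ and any unanimous action profile in which every agent in $N^t$ chooses some $a \in \{0,1\}$, and show that for $Q^t$ large enough, no single agent strictly prefers to deviate to $1-a$. Because every agent in $N^t$ faces the same information set and the same (symmetric) deviation payoff, it suffices to check the no-deviation condition for an arbitrary representative agent.

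The comparison is state-by-state. If agent $n$ conforms, her realized payoff equals $u(\theta, a, Q^t)$, since all $Q^t$ agents take $a$. If she unilaterally deviates to $1-a$, then she is the only agent taking $1-a$, so her realized payoff is $u(\theta, 1-a, 1)$. Assumption 3 says that when $m$ is sufficiently large, $u(a,b,m) > u(a,1-b,1)$ for every $a,b \in \{0,1\}$. I would let $\hat{Q}$ be the maximum of the (at most four) thresholds produced by this assumption across the combinations of $(a,b) \in \{0,1\}^2$. Then for $Q^t \geq \hat{Q}$ the inequality $u(\theta, a, Q^t) > u(\theta, 1-a, 1)$ holds simultaneously for both $\theta = 0$ and $\theta = 1$.

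Since the inequality is pointwise in $\theta$, it survives taking expectations against the posterior probability $P = \mathcal{P}(\theta = 1 \mid I^t)$, yielding $P u(1,a,Q^t) + (1-P) u(0,a,Q^t) > P u(1,1-a,1) + (1-P) u(0,1-a,1)$. Hence conforming is a strict best response for every agent in $N^t$, and this holds for both choices $a=0$ and $a=1$ of the unanimous action. The conclusion is therefore uniform over $I^t$: the posterior $P$ never enters the comparison in an essential way, since the group-versus-solo dominance from Assumption 3 is itself state-pointwise.

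I do not anticipate any real obstacle; the argument essentially repackages Assumption 3 into the equilibrium language that will be used downstream. The only bookkeeping is choosing $\hat{Q}$ large enough to handle the four $(a,b)$ combinations at once, and noting that the same $\hat{Q}$ works for every $I^t$ because the relevant inequality depends on $u$ and $Q^t$ alone.
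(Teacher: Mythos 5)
Your proposal is correct and follows essentially the same route as the paper: both arguments compare the conforming payoff $u(\theta,a,Q^t)$ against the lone-deviator payoff $u(\theta,1-a,1)$ state by state, invoke Assumption 3 (together with monotonicity in $m$ and the symmetry $u(0,0,m)=u(1,1,m)$) to get pointwise dominance for large $Q^t$, and note that the comparison is independent of the posterior $P$, hence uniform over $I^t$. No gap.
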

\begin{proof}
Without loss of generality, assume that every agent in $N^t$ chooses action 1 given $I^t$. Let $P$ denote the probability that $\theta=1$ given $I^t$. For each agent, her expected payoff from action $1$ is $Pu(1,1,Q^t)+(1-P)u(1,0,Q^t)$, while her payoff from action $0$ is $Pu(1,0,1)+(1-P)u(1,1,1)$. For any $P\in(0,1)$, as long as $Q^t$ is such that $u(1,0,Q^t)\geq u(1,1,1)$, the agent's expected payoff from action 1 is higher. Hence, the action profile with unanimous action constitutes mutual best responses.
\end{proof}

Given an equilibrium $\phi^*$, consider an arbitrary subset of the set of time periods with infinite complete observations, which consists of sufficiently many consecutive communities of at least size $\hat{Q}$ starting from some time period $k$. From the assumption that $G(Q\geq \hat{Q})>0$, such a subset exists with probability 1. Let $B_k$ be the neighborhood consisting of the first $k$ communities, and consider any agent in a community of size $Q$ who has a private signal $s$ and observes $B_k$. Let $R_{\phi^*}^{B_k}$ be the random variable of the posterior belief about the true state being $1$ given each decision in $B_k$. For each realized belief $R_{\phi^*}^{B_k}=r$, we say that a realized private signal $s$ and decision sequence $h$ in $B_k$ \textit{induce} $r$ if $\mathcal{P}_{\phi^*}(\theta=1|h,s)=r$.

\begin{lem}
In any equilibrium $\phi^*$, for either state $\theta=0,1$ and for any $s\in S$, we have
\begin{align*}
&\lim_{\epsilon\rightarrow 0^+}(\lim\sup_{k\rightarrow\infty}\mathcal{P}_{\phi^*}(R_{\phi^*}^{B_k}>1-\epsilon|0,s))\\
=&\lim_{\epsilon\rightarrow 0^+}(\lim\sup_{k\rightarrow\infty}\mathcal{P}_{\phi^*}(R_{\phi^*}^{B_k}<\epsilon|1,s))=0.
\end{align*}
\end{lem}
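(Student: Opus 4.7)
The plan is to observe that $R_{\phi^*}^{B_k}$ is, by its very definition, the Bayesian posterior on $\theta=1$ conditional on the private signal $s$ and the realized action sequence in $B_k$ under the probability measure $\mathcal{P}_{\phi^*}$ generated by the strategy profile $\phi^*$. Consequently the bound is a one-line tower-property computation and no martingale convergence is needed. The core identity is that, on the event $\{R_{\phi^*}^{B_k} > 1-\epsilon\}$, the conditional probability of $\theta = 0$ given all observed information is exactly $1 - R_{\phi^*}^{B_k} < \epsilon$; integrating this out then gives a $k$-uniform upper bound on the joint probability.

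Concretely, I would compute
\begin{align*}
\mathcal{P}_{\phi^*}\!\left(\theta = 0,\, R_{\phi^*}^{B_k} > 1-\epsilon \,\big|\, s\right)
= \mathbb{E}_{\phi^*}\!\left[(1 - R_{\phi^*}^{B_k})\,\mathbf{1}_{\{R_{\phi^*}^{B_k} > 1-\epsilon\}}\,\big|\, s\right] \le \epsilon,
\end{align*}
and then divide by the quantity $\mathcal{P}_{\phi^*}(\theta=0\mid s) = f^Q_0(s)/(f^Q_0(s)+f^Q_1(s))$, which is strictly positive by the assumption that the signal densities are everywhere non-zero and which does not depend on $k$. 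This delivers
\begin{align*}
\mathcal{P}_{\phi^*}\!\left(R_{\phi^*}^{B_k} > 1-\epsilon \,\big|\, \theta=0,\, s\right) \le \frac{\epsilon}{\mathcal{P}_{\phi^*}(\theta=0\mid s)}
\end{align*}
uniformly in $k$. Taking $\limsup_{k \to \infty}$ preserves the bound, and then letting $\epsilon \downarrow 0$ drives the right-hand side to zero. The second identity follows from the symmetric argument applied to $1 - R_{\phi^*}^{B_k}$ in the conditional measure $\mathcal{P}_{\phi^*}(\,\cdot\,\mid \theta=1, s)$.

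There is really no analytic obstacle here. The only point requiring care is conceptual: one must interpret $R_{\phi^*}^{B_k}$ as the \emph{true} Bayesian posterior under $\mathcal{P}_{\phi^*}$—which incorporates the hypothesis that all predecessors play $\phi^*$—so that the pointwise identity $\mathcal{P}_{\phi^*}(\theta=0 \mid h, s) = 1 - R_{\phi^*}^{B_k}$ is available for each realized action sequence $h$. This is immediate from the definition, and it is also the reason the argument uses no structural feature of equilibrium beyond the measurability of the observed action history with respect to itself.
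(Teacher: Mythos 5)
Your proof is correct and rests on exactly the same fact as the paper's: Bayes' rule forces the event that the posterior is confidently wrong to be rare, since a history inducing a posterior above $1-\epsilon$ on state $1$ must carry a likelihood ratio that caps the total $\theta=0$ probability of all such histories. The paper packages this as a proof by contradiction (summing the implied likelihood ratios over $H_{\epsilon}$ until they exceed $1$), whereas you give the equivalent direct tower-property bound $\mathcal{P}_{\phi^*}(R_{\phi^*}^{B_k}>1-\epsilon\mid \theta=0,s)\leq \epsilon\,\bigl(f^Q_0(s)+f^Q_1(s)\bigr)/f^Q_0(s)$, uniform in $k$ -- the same computation, stated a bit more cleanly.
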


\begin{proof}
I prove here that $\lim_{\epsilon\rightarrow 0^+}(\lim\sup_{k\rightarrow\infty}\mathcal{P}_{\phi^*}(R_{\phi^*}^{B_k}>1-\epsilon|0,s))=0$, and the second equality would follow from an analogous argument. Suppose that the equality does not hold; then $s\in S$ and $\rho>0$ exist such that for any $\epsilon>0$ and any $N\in\mathbb{N}$, $k>N$ exists such that $\mathcal{P}_{\phi^*}(R_{\phi^*}^{B_k}>1-\epsilon|0,s)>\rho$. Consider any realized action sequence $h_{\epsilon}$ from $B_k$ that, together with $s$, induces some $r>1-\epsilon$, and let $H_{\epsilon}$ denote the set of all such action sequences; thus, we know that
\begin{align*}
\frac{\mathcal{P}_{\phi^*}(h_{\epsilon}|\theta')f^Q_{\theta'}(s)}{\mathcal{P}_{\phi^*}(h_{\epsilon}|\theta)f^Q_{\theta}(s)+\mathcal{P}_{\phi^*}(h_{\epsilon}|\theta')f^Q_{\theta'}(s)}&=r\\
\sum_{h_{\epsilon}\in H_{\epsilon}}\mathcal{P}_{\phi^*}(h_{\epsilon}|\theta)&>\rho.
\end{align*}
The above two conditions imply that
\begin{align*}
1\geq \sum_{h_{\epsilon}\in H_{\epsilon}}\mathcal{P}_{\phi^*}(h_{\epsilon}|\theta')>\frac{(1-\epsilon)\rho f^Q_{\theta}(s)}{\epsilon f^Q_{\theta'}(s)}.
\end{align*}
For sufficiently small $\epsilon$, we have $\frac{(1-\epsilon)\rho f^Q_{\theta}(s)}{\epsilon f^Q_{\theta'}(s)}>1$, which is a contradiction.
\end{proof}

\begin{lem}
There exists an equilibrium $\phi^*$ such that given any realized belief $r\in(0,1)$ about state $1$ for an agent observing $B_k$, for any $\hat{r}\in(0,r)$ ($\hat{r}\in(r,1)$), $N(r,\hat{r})\in\mathbb{N}$ exists such that a realized belief that is less than $\hat{r}$ (higher than $\hat{r}$) can be induced by having complete observations of additional $N(r,\hat{r})$ communities, each with unanimous action $0$ ($1$).
\end{lem}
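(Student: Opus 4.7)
The plan is to construct $\phi^*$ explicitly as a signal-threshold strategy and then show that each additional complete observation of a unanimous community multiplies the posterior likelihood ratio by a factor bounded away from $1$ in the desired direction. Since $G(Q \geq \hat{Q}) > 0$, pick one specific $Q^* \geq \hat{Q}$ in the support of $G$ (the only communities we will exploit below). Fix any $s^* \in (0,1)$ and define $\phi^*$ as follows: in every community $N^t$ of size $Q^t \geq \hat{Q}$ with information set $I^t = (s^t, h^t)$, all agents unanimously play $1$ if $s^t > s^*$, play $0$ if $s^t < -s^*$, and otherwise play the action that maximizes $\mathcal{P}(a=\theta \mid h^t)$ (with any fixed tiebreaking rule, defaulting to the signal-sign action if $h^t$ is empty). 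By Lemma 1, any unanimous action is a mutual best response when $Q^t \geq \hat{Q}$, so this prescription is incentive-compatible; in the measure-zero relevant set of communities with $Q^t < \hat{Q}$ I just fix any equilibrium continuation.

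Next I would bound the single-step likelihood ratio. Take any $\tau > k$ with $Q^\tau = Q^*$, and suppose the observer knows $h^\tau$ in full (by completeness). Under the constructed strategy, community $N^\tau$ chooses unanimous action $0$ iff its signal falls below $s^*$ when $h^\tau$ favors state $0$, and below $-s^*$ when $h^\tau$ favors state $1$. Hence
\[
P(a^\tau = 0 \mid \theta, h^\tau) \in \{F_\theta^{Q^*}(s^*),\, F_\theta^{Q^*}(-s^*)\}.
\]
MLRP implies $F_1^{Q^*}(x) < F_0^{Q^*}(x)$ for every $x \in (-1,1)$, so the likelihood ratio
\[
\rho(h^\tau) := \frac{P(a^\tau = 0 \mid \theta=1, h^\tau)}{P(a^\tau = 0 \mid \theta=0, h^\tau)}
\]
is bounded uniformly in $h^\tau$ by $\mu := \max\bigl\{F_1^{Q^*}(s^*)/F_0^{Q^*}(s^*),\, F_1^{Q^*}(-s^*)/F_0^{Q^*}(-s^*)\bigr\} < 1$. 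A symmetric computation shows that for unanimous action $1$, the ratio $P(a^\tau=1 \mid \theta=1, h^\tau)/P(a^\tau=1 \mid \theta=0, h^\tau)$ is bounded below by some $\lambda > 1$.

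Finally I would iterate. Consider $N$ additional times $\tau_1 < \cdots < \tau_N$ with $Q^{\tau_i} = Q^*$ (a positive-probability event at each step), each with complete observation, and each realizing unanimous action $0$. Because completeness of observation reveals $h^{\tau_i}$ before the $i$-th update, Bayes' rule gives
\[
\frac{\mathcal{P}(\theta=1 \mid \text{new data})}{\mathcal{P}(\theta=0 \mid \text{new data})} \;=\; \frac{r}{1-r} \cdot \prod_{i=1}^{N} \rho(h^{\tau_i}) \;\leq\; \mu^N \cdot \frac{r}{1-r}.
\]
Choosing $N = N(r,\hat{r})$ so that $\mu^N \cdot r/(1-r) < \hat{r}/(1-\hat{r})$ forces the posterior below $\hat{r}$, and each conditioning event has strictly positive probability (all signal densities and community-size probabilities are positive), so this is a realized belief as required. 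The other direction—inducing a belief above $\hat{r}$ via $N$ unanimous-action-$1$ observations—is identical using $\lambda > 1$.

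The delicate step is Paragraph 2: I need the bound $\mu < 1$ to hold uniformly over the entire history, even though $h^\tau$ itself depends on $\theta$ through upstream learning. What saves the argument is exactly the completeness assumption, which collapses $P(a^\tau = 0 \mid \theta, h^\tau)$ to a pure signal-threshold expression involving only $\theta$ and the fixed cutoff $s^*$; the prior dynamics enter only through which of the two branches $\{F_\theta^{Q^*}(s^*), F_\theta^{Q^*}(-s^*)\}$ is active, and MLRP handles both branches simultaneously.
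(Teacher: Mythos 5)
Your argument is correct and takes essentially the same route as the paper's own proof: a signal-threshold equilibrium justified by Lemma 1, the completeness of observation used to collapse $P(a^\tau=0\mid\theta,h^\tau)$ to one of the two pure threshold probabilities $\{F^{Q^*}_\theta(s^*),F^{Q^*}_\theta(-s^*)\}$, a uniform one-step likelihood-ratio bound $\mu<1$ via MLRP/first-order stochastic dominance, and iteration to pin down $N(r,\hat{r})$. The only differences are cosmetic (you multiply posterior odds where the paper recurses on the posterior $r_m$ directly), with one small caveat: the $N$ additional communities should be taken consecutive (a positive-probability event since community sizes are i.i.d.), as otherwise the intervening actions revealed by completeness would also enter the likelihood product and your displayed equality would not hold as written.
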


\begin{proof}
Without loss of generality, assume that $\hat{r}\in(0,r)$. We know that there is a private signal $s$ and an action sequence $h$ from $B_k$ such that
\begin{align*}
r=\frac{\mathcal{P}_{\phi^*}(h|1)f^Q_1(s)}{\mathcal{P}_{\phi^*}(h|1)f^Q_1(s)+\mathcal{P}_{\phi^*}(h|0)f^Q_0(s)}.
\end{align*}
Let $a^{k+1}=0$ denote the event that action $0$ is taken by every agent in the $(k+1)$th community. The new belief would then be
\begin{align*}
r_1=\frac{\mathcal{P}_{\phi^*}(h|1)f^Q_1(s)\times \mathcal{P}_{\phi^*}(a^{k+1}=0|h,1)}{\mathcal{P}_{\sigma^*(1)}(h|1)f^Q_1(s)\times \mathcal{P}_{\phi^*}(a^{k+1}=0|h,1)+\mathcal{P}_{\phi^*}(h|0)f^Q_0(s)\times \mathcal{P}_{\phi^*}(a^{k+1}=0|h,0)}.
\end{align*}

I now explicitly describe an equilibrium $\phi^*$ that will prove this result. Consider the following strategy profile for agents in an arbitrary community $N^t$:

\begin{itemize}
\item{1.} If $Q^t<\hat{Q}$, each agent takes the action that matches the state with higher probability according to signal and observation.
\item{2.} If $Q^t\geq\hat{Q}$:
\begin{itemize}
\item Fix some $\epsilon>0$. Let $s_1(\epsilon),s_0(\epsilon)$ be such that $F^Q_1(s_1(\epsilon))-F^Q_1(s_0(\epsilon))=F^Q_0(s_1(\epsilon))-F^Q_0(s_0(\epsilon))=1-2\epsilon$. Each agent takes action $1$ if $s^t\geq s_1(\epsilon)$ and action $0$ if $s^t\leq s_0(\epsilon)$.
\item Otherwise, each agent takes the action that matches the state with higher probability according to observation only.
\end{itemize}
\end{itemize}

By Lemma 1, when $Q$ is sufficiently large, both (1) and (2) constitute mutual best responses given $I^t$. Hence, the above strategy profile is an equilibrium. We then have
\begin{align*}
\mathcal{P}_{\phi^*}(a^{k+1}=0|h,1)=&F^Q_1(s_0(\epsilon))+(F^Q_1(s_1(\epsilon))-F^Q_1(s_0(\epsilon)))\\
&\mathcal{P}_{\phi^*}(a^{k+1}=0|s^{k+1}\in(s_0(\epsilon),s_1(\epsilon)),h,1)\\
\mathcal{P}_{\phi^*}(a^{k+1}=0|h,0)=&F^Q_0(s_0(\epsilon))+(F^Q_0(s_1(\epsilon))-F^Q_0(s_0(\epsilon)))\\
&\mathcal{P}_{\phi^*}(a^{k+1}=0|s^{k+1}\in(s_0(\epsilon),s_1(\epsilon)),h,0)
\end{align*}

By construction of $\phi^*$, $F^Q_1(s_0(\epsilon))<F^Q_0(s_0(\epsilon))$. Based on the assumption of complete observations, $h$ consists of every action that agent $k+1$ may have observed, which implies that $\mathcal{P}_{\phi^*}(a^{k+1}=0|s^{k+1}\in(s_0(\epsilon),s_1(\epsilon)),h,1)=\mathcal{P}_{\phi^*}(a^{k+1}=0|s^{k+1}\in(s_0(\epsilon),s_1(\epsilon)),h,0)$. Hence, we know that the ratio $\frac{\mathcal{P}_{\phi^*}(a^{k+1}=0|h,1)}{\mathcal{P}_{\phi^*}(a^{k+1}=0|h,0)}$ has a $<1$ upper bound that is independent of $h$. Let $y$ denote this bound, and we have
\begin{align*}
&\frac{r_1}{r}=\frac{1+\frac{\mathcal{P}_{\phi^*}(h|0)f^Q_0(s)}{\mathcal{P}_{\phi^*}(h|1)f^Q_1(s)}}{1+\frac{\mathcal{P}_{\phi^*}(h|0)f^Q_0(s)}{\mathcal{P}_{\phi^*}(h|1)f^Q_1(s)}\frac{\mathcal{P}_{\phi^*}(a^{k+1}=0|h,1)}{\mathcal{P}_{\phi^*}(a^{k+1}=0|h,0)}}\\
=&\frac{1}{r+(1-r)\frac{\mathcal{P}_{\phi^*}(a^{k+1}=0|h,0)}{\mathcal{P}_{\phi^*}(a^{k+1}=0|h,1)}}<\frac{1}{r+(1-r)\frac{1}{y}}<1.
\end{align*}

Note that the expression on the right-hand side above is increasing in $r$. Let $r_m$ denote the belief induced by $h\cup\{a^{k+1},\cdots, a^{k+m}\}$ where $a^{k+1}=\cdots=a^{k+m}=0$. We have
\begin{align*}
r_m=r\times\frac{r_1}{r}\times\cdots\times\frac{r_m}{r_{m-1}}<r\times\frac{1}{r+(1-r)\frac{1}{y}}\times\cdots\times\frac{1}{r_m+(1-r_m)\frac{1}{y}}.
\end{align*}

Because $r,r_1,\cdots,r_m$ are decreasing, we have $r_m<r\times(\frac{1}{r+(1-r)\frac{1}{y}})^m$. Hence, we can find the desired $N(r,\hat{r})$ for any $\hat{r}\in(0,r)$ such that a realized belief that is less than $\hat{r}$ can be induced by $s$ and $h\cup\{a^{k+1},\cdots,a^{k+{N(r,\hat{r})}}\}$, where $a^{k+1}=\cdots=a^{k+{N(r,\hat{r})}}=0$.
\end{proof}

\begin{lem}
Consider the $\phi^*$ constructed above. Let $\hat{a}$ be the action that matches the state with higher probability given $s$ and every action in $B_k$, and let $\mathcal{P}_{\phi^*}^{B_k}(\hat{a}\neq \theta|s)$ denote the probability that $\hat{a}$ does not match the state. We have $\lim_{k\rightarrow\infty}\mathcal{P}_{\phi^*}^{B_k}(\hat{a}\neq \theta|s)=0$.
\end{lem}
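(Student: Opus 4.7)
The plan is a proof by contradiction via a sub-optimal-strategy comparison. Since augmenting the observed neighborhood can only improve the Bayes-optimal classifier, $\mathcal{P}_{\phi^*}^{B_k}(\hat{a}=\theta|s)$ is non-decreasing in $k$ and has a limit $\rho$; I assume toward contradiction $\rho<1$. For any large $t'$ and $t''=t'+\Delta$ (with $\Delta$ chosen below), I would consider the rule that first outputs $\hat{a}^{B_{t'}}$ based on $(s,B_{t'})$ and then switches to $1-\hat{a}^{B_{t'}}$ exactly when the unanimous actions of the $\Delta$ consecutive communities in periods $t'+1,\ldots,t'+\Delta$ all equal $1-\hat{a}^{B_{t'}}$. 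Since $\hat{a}^{B_{t''}}$ is the Bayes-optimal classifier on $B_{t''}$, any matching-probability improvement of this sub-optimal rule over $\hat{a}^{B_{t'}}$ is inherited, and exhibiting a uniform positive lower bound $\mu$ on that improvement for all large $t'$ gives $\mathcal{P}_{\phi^*}^{B_{t''}}(\hat{a}=\theta|s)>\rho$, a contradiction.

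By a Bayes-rule decomposition, the improvement equals $\sum_{a,h}\mathcal{P}(h,\text{next }\Delta\text{ all }1-a|s)\cdot\bigl[2\,\mathcal{P}(\theta=1-a|s,h,\text{next all }1-a)-1\bigr]$, summed over prefixes $h$ in $B_{t'}$ with $\hat{a}^{B_{t'}}(s,h)=a$. Fix $\eta'>0$. By Lemma 2, choose $\epsilon>0$ and a threshold $T$ so that for $t'\geq T$ the belief $R^{B_{t'}}$ lies in $[\epsilon,1-\epsilon]$ with conditional probability at least $1-\eta$, for a small $\eta$ set below. Lemma 3 then provides $\Delta=N(1-\epsilon,\tfrac{1}{2}-\eta')$ such that, on this good belief range, $\Delta$ additional unanimous actions opposite to $\hat{a}^{B_{t'}}$ drive the posterior beyond $\tfrac{1}{2}+\eta'$ in favor of the new action, making the bracketed margin at least $2\eta'$; the bad-range contribution is bounded below by $-\eta$. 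Meanwhile, by construction of $\phi^*$, every community of size $\geq\hat{Q}$ ignores observation and acts on its signal alone whenever the signal falls in the extreme tails of its distribution, so under the true state each such community chooses the correct action with probability at least some $p>0$ determined by the finitely many signal distributions in the support; independence of signals across periods gives $\mathcal{P}(\text{next }\Delta\text{ all }\theta|\theta,\text{any history})\geq p^{\Delta}$. Combining with $\mathcal{P}_{\phi^*}^{B_{t'}}(\hat{a}\neq\theta|s)\geq 1-\rho$, the improvement is at least $2\eta'\cdot p^{\Delta}(1-\rho-\eta)-\eta$, which exceeds a fixed $\mu>0$ once $\eta$ is taken sufficiently small.

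The main obstacle is securing this uniform lower bound despite the correlation between successive community actions through observational learning, which precludes a direct law-of-large-numbers argument. The resolution exploits two history-independent ingredients built into the equilibrium $\phi^*$ of Lemma 3: the uniform belief-update factor $y<1$ (delivering the post-switch margin $\eta'$) and the extreme-tail floor probability $p$ (delivering the switch event's likelihood). Together these substitute for independence and yield the uniform improvement $\mu$, giving $\mathcal{P}_{\phi^*}^{B_{t''}}(\hat{a}=\theta|s)\geq\mathcal{P}_{\phi^*}^{B_{t'}}(\hat{a}=\theta|s)+\mu>\rho$ once $t'\geq T$ is chosen with $\mathcal{P}_{\phi^*}^{B_{t'}}(\hat{a}=\theta|s)>\rho-\mu/2$, contradicting $\rho<1$ and establishing $\lim_{k\to\infty}\mathcal{P}_{\phi^*}^{B_k}(\hat{a}\neq\theta|s)=0$.
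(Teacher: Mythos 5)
Your strategy is the paper's own: assume the error probability $\mathcal{P}_{\phi^*}^{B_k}(\hat{a}\neq\theta|s)$ has a positive limit, then contradict it with a sub-optimal rule that flips the prefix-optimal action upon seeing $\Delta$ consecutive unanimous opposite actions, using Lemma 2 to keep posteriors off the extremes, Lemma 3 to guarantee that $\Delta$ such actions reverse the belief past $\frac{1}{2}$ by a margin, and a history-independent floor ($p$ in your notation, $w$ in the paper's) on the probability of the $\Delta$-streak in place of an independence or law-of-large-numbers argument. The per-prefix decomposition of the improvement into (switch probability) times (posterior margin) is also exactly the paper's.

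One step fails as written, though: the claim that $2\eta'p^{\Delta}(1-\rho-\eta)-\eta$ ``exceeds a fixed $\mu>0$ once $\eta$ is taken sufficiently small'' is circular. Your $\epsilon$ is chosen via Lemma 2 to achieve belief coverage $1-\eta$, so shrinking $\eta$ forces $\epsilon$ to shrink, which forces $\Delta=N(1-\epsilon,\frac{1}{2}-\eta')$ to grow and $p^{\Delta}$ to vanish; the positive term and the $-\eta$ term race each other with no guarantee the difference is ever positive. The $-\eta$ loss arises only because your rule switches on every prefix, including those with extreme beliefs. The paper avoids this by (i) defining the sub-optimal rule to switch only when the realized belief lies in the moderate band $[\frac{1}{2},z]$, so the extreme band contributes exactly zero rather than $-\eta$, and (ii) fixing the bad-mass threshold once and for all at half the limiting error (its $\alpha=\frac{1}{2}\rho$), which pins down $z$, $N(\pi)$ and the floor $w^{N(\pi)}$ as constants independent of any further parameter; the contradiction then comes from comparing the resulting fixed improvement $\pi w^{N(\pi)}\frac{1}{2}\rho$ against the arbitrarily small decrement $\delta$ available once the error probability is close to its limit. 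With those two adjustments your argument closes and coincides with the paper's.
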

\begin{proof}
Suppose the opposite: noting that $\mathcal{P}_{\phi^*}^{B_k}(\hat{a}\neq \theta|s)$ must be weakly decreasing in $k$, it follows that $\lim_{k\rightarrow\infty}\mathcal{P}_{\phi^*}^{B_k}(\hat{a}\neq \theta|s)>0$. Let $\rho>0$ denote this limit. From Lemma 2, we know that for any $\alpha>0$ and for either true state $\theta=0,1$, $z\in[\frac{1}{2},1)$ exists such that $M\in\mathbb{N}$ exists such that $\max\{\mathcal{P}_{\phi^*}(R_{\phi^*}^{B_{k}}>z|0,s),\mathcal{P}_{\phi^*}(1-R_{\phi^*}^{B_{k}}>z|1,s)\}<\alpha$ for any $k>M$. If $\alpha=\frac{1}{2}\rho$, then we have $\max\{\mathcal{P}_{\phi^*}(R_{\phi^*}^{B_k}>z|0,s),\mathcal{P}_{\phi^*}(1-R_{\phi^*}^{B_k}>z|1,s)\}<\frac{1}{2}\rho$ for any $k>M$. Then, for any $\delta>0$, we can find a sufficiently large $k$ such that for any $k'\geq k$, (1) $\mathcal{P}_{\phi^*}^{B_{k'}}(\hat{a}\neq \theta|s)\in(\rho,\rho+\delta)$ and (2) $\max\{\mathcal{P}_{\phi^*}(R_{\phi^*}^{B_{k'}}>z|0,s),\mathcal{P}_{\phi^*}(1-R_{\phi^*}^{B_{k'}}>z|1,s)\}<\frac{1}{2}\rho$. Hence, we have
\begin{align*}
&\frac{f^Q_0(s)}{f^Q_0(s)+f^Q_1(s)}\mathcal{P}_{\phi^*}(R_{\phi^*}^{B_{k'}}\in[\frac{1}{2},z]|0,s)+\frac{f^Q_1(s)}{f^Q_0(s)+f^Q_1(s)}\mathcal{P}_{\phi^*}(1-R_{\phi^*}^{B_{k'}}\in[\frac{1}{2},z]|1,s)\\
=&\mathcal{P}_{\phi^*}^{B_{k'}}(\hat{a}\neq \theta|s)-\frac{f^Q_0(s)}{f^Q_0(s)+f^Q_1(s)}\mathcal{P}_{\phi^*}(R_{\phi^*}^{B_{k'}}>z|0,s)\\
&-\frac{f^Q_1(s)}{f^Q_0(s)+f_1(s)}\mathcal{P}_{\phi^*}(1-R_{\phi^*}^{B_{k'}}>z|1,s)>\frac{1}{2}\rho.
\end{align*}

By Lemma 3, for any $\pi>0$, $N(\pi)=\max\{N(z,\frac{1}{2+\pi}),N(1-z,1-\frac{1}{2+\pi})\}\in\mathbb{N}$ exists such that whenever $\theta=0$ and $R_{\phi^*}^{B_k}\in[\frac{1}{2},z]$ or $\theta=1$ and $1-R_{\phi^*}^{B_k}\in[\frac{1}{2},z]$, additional $N(\pi)$ observations can reverse an incorrect decision. Consider the following (sub-optimal) updating method for a rational agent who observes $B_{k'}=B_{k+N(\pi)}$: her action changes from $1$ to $0$ if and only if $R_{\phi^*}^{B_k}\in[\frac{1}{2},z]$, and $a^{k+1}=\cdots=a^{k+N(\pi)}=0$; her action changes from $0$ to $1$ if and only if $1-R_{\phi^*}^{B_k}\in[\frac{1}{2},z]$, and $a^{k+1}=\cdots=a^{k+N(\pi)}=1$. Let $h$ denote a decision sequence from $B_k$ that, together with $s$, induces such a posterior belief in the former case, and let $h'$ denote a decision sequence from $B_k$ that, together with $s$, induces such a posterior belief in the latter case. Let $H$ and $H'$ respectively denote the sets of these decision sequences. We have
\begin{align*}
&\mathcal{P}_{\phi^*}^{B_k}(\hat{a}\neq \theta|s)-\mathcal{P}_{\phi^*}^{B_{k'}}(\hat{a}\neq \theta|s)\\
\geq&\sum_{h\in H}(\frac{f^Q_0(s)}{f^Q_0(s)+f^Q_1(s)}\mathcal{P}_{\phi^*}(h,a^{k+1}=\cdots=a^{k+N(\pi)}=0|0)\\
&-\frac{f^Q_1(s)}{f^Q_0(s)+f^Q_1(s)}\mathcal{P}_{\phi^*}(h,a^{k+1}=\cdots=a^{k+N(\pi)}=0|1))\\
&+\sum_{h'\in H'}(\frac{f^Q_1(s)}{f^Q_0(s)+f^Q_1(s)}\mathcal{P}_{\phi^*}(h',a^{k+1}=\cdots=a^{k+N(\pi)}=1|1)\\
&-\frac{f^Q_0(s)}{f^Q_0(s)+f^Q_1(s)}\mathcal{P}_{\phi^*}(h',a^{k+1}=\cdots=a^{k+N(\pi)}=1|0)).
\end{align*}
From the proof of Lemma 3, we know that for every $h$,
\begin{align*}
&\frac{\mathcal{P}_{\phi^*}(h,a^{k+1}=\cdots=a^{k+N(\pi)}=0|0)f^Q_0(s)}{\mathcal{P}_{\phi^*}(h,a^{k+1}=\cdots=a^{k+N(\pi)}=0|0)f^Q_0(s)+\mathcal{P}_{\phi^*}(h,a^{k+1}=\cdots=a^{k+N(\pi)}=0|1)f^Q_1(s)}\\
\geq&\frac{1+\pi}{2+\pi},
\end{align*}
which implies that
\begin{align*}
&\mathcal{P}_{\phi^*}(h,a^{k+1}=\cdots=a^{k+N(\pi)}=0|0)f^Q_0(s)-\mathcal{P}_{\phi^*}(h,a^{k+1}=\cdots=a^{k+N(\pi)}=0|1)f^Q_1(s)\\
\geq&\pi f^Q_1(s)\mathcal{P}_{\phi^*}(h,a^{k+1}=\cdots=a^{k+N(\pi)}=0|1).
\end{align*}
From the proof of Lemma 3, we know that the quantities $\mathcal{P}_{\phi^*}(a^{k+1}=0|h,1)$ and $\mathcal{P}_{\phi^*}(a^{k+1}=1|h,0)$ have a $>0$ lower bound that is independent of $h$. We denote this bound as $w$, and the above inequality can be written as
\begin{align*}
&\mathcal{P}_{\phi^*}(h,a^{k+1}=\cdots=a^{k+N(\pi)}=0|0)f^Q_0(s)-\mathcal{P}_{\phi^*}(h,a^{k+1}=\cdots=a^{k+N(\pi)}=0|1)f^Q_1(s)\\
\geq&\pi f^Q_1(s)w^{N(\pi)}\mathcal{P}_{\phi^*}(h|1).
\end{align*}
Based on the definition of $h$, we have
\begin{align*}
\frac{1}{2}\leq\frac{\mathcal{P}_{\phi^*}(h|1)f^Q_1(s)}{\mathcal{P}_{\phi^*}(h|1)f^Q_1(s)+\mathcal{P}_{\phi^*}(h|0)f^Q_0(s)}\leq z,
\end{align*}
which implies that
\begin{align*}
\mathcal{P}_{\phi^*}(h|1)f^Q_1(s)\geq \mathcal{P}_{\phi^*}(h|0)f^Q_0(s).
\end{align*}
Similarly, we have
\begin{align*}
&\mathcal{P}_{\phi^*}(h',a^{k+1}=\cdots=a^{k+N(\pi)}=1|1)f^Q_1(s)-\mathcal{P}_{\phi^*}(h',a^{k+1}=\cdots=a^{k+N(\pi)}=1|0)f^Q_0(s)\\
\geq&\pi f^Q_0(s)w^{N(\pi)}\mathcal{P}_{\phi^*}(h'|0),
\end{align*}
and
\begin{align*}
\mathcal{P}_{\phi^*}(h'|0)f^Q_0(s)\geq \mathcal{P}_{\phi^*}(h'|1)f^Q_1(s).
\end{align*}
From the previous construction, we know that
\begin{align*}
&\frac{f^Q_0(s)}{f^Q_0(s)+f^Q_1(s)}\sum_{h\in H} \mathcal{P}_{\phi^*}(h|0)+\frac{f^Q_1(s)}{f^Q_0(s)+f^Q_1(s)}\sum_{h'\in H'} \mathcal{P}_{\phi^*}(h'|1)\\
=&\frac{f^Q_0(s)}{f^Q_0(s)+f^Q_1(s)}\mathcal{P}_{\phi^*}(R_{\phi^*}^{B_{k'}}\in[\frac{1}{2},z]|0,s)+\frac{f^Q_1(s)}{f^Q_0(s)+f^Q_1(s)}\mathcal{P}_{\phi^*}(1-R_{\phi^*}^{B_{k'}}\in[\frac{1}{2},z]|1,s)\\
>&\frac{1}{2}\rho.
\end{align*}
Combining the previous inequalities, we have
\begin{align*}
&\mathcal{P}_{\phi^*}^{B_k}(\hat{a}\neq \theta|s)-\mathcal{P}_{\phi^*}^{B_{k'}}(\hat{a}\neq \theta|s)\\
>&\pi w^{N(\pi)}\frac{1}{2}\rho.
\end{align*}
From the previous construction, we also know that
\begin{align*}
\mathcal{P}_{\phi^*}^{B_k}(\hat{a}\neq \theta|s)-\mathcal{P}_{\phi^*}^{B_{k'}}(\hat{a}\neq \theta|s)<\delta.
\end{align*}
Clearly, for some given $\pi>0$, a sufficiently small $\delta$ exists such that $\pi w^{N(\pi)}\frac{1}{2}\rho>\delta$, which implies a contradiction.
\end{proof}

Lemma 4 implies that in the equilibrium $\phi^*$ constructed in Lemma 3, truth-telling observation occurs. We can then compute the probability of taking the state-matching action: $\lim_{t\rightarrow\infty}\mathcal{P}_{\phi^*}(a_n^t=\theta)=1-\frac{1}{2}F^Q_1(s_0(\epsilon))-\frac{1}{2}(1-F^Q_0(s_1(\epsilon))$. From the characterization of $s_0(\epsilon)$ and $s_1(\epsilon)$, we know that $F^Q_1(s_0(\epsilon))+1-F^Q_0(s_1(\epsilon))<F^Q_1(s_0(\epsilon))+1-F^Q_1(s_1(\epsilon))=2\epsilon$, which implies that $\lim_{t\rightarrow\infty}\mathcal{P}_{\phi^*}(a_n^t=\theta)>1-\epsilon$.
\end{proof}

\begin{proof}[Proof of Proposition 2]

Consider an infinite subset of communities $\{N^{t_1},N^{t_2},\cdots\}$ such that $N^{t_{n+1}}$'s observation includes $N^{t_{n}}$. By the assumption of expanding observations, we know that each community belongs to at least one such subset. Then, it suffices to show that asymptotic learning occurs in each of these subsets.

Consider the following equilibrium $\phi^*$: for every $I^t$, every agent in $N^t$ takes the action that matches the state with higher probability according to both signal and observation. Because actions are unanimous in equilibrium, I use $a^t$ to denote the equilibrium action in community $N^t$. Suppose that the result does not hold, i.e., there exists $\{N^{t_1},N^{t_2},\cdots\}$ such that $\lim_{n\rightarrow}\mathcal{P}_{\phi^*}(a^{t_n}=\theta)=P<1$.

For an arbitrary $\epsilon>0$, find $n$ such that $\mathcal{P}_{\phi^*}(a^{t_n}=\theta)\in(P-\epsilon,P)$. It follows that there must be some community size $Q'$ in support of $G$, such that either $\mathcal{P}_{\phi^*}(a^{t_n}=1|\theta=1,Q^{t_n}=Q')<P$ or $\mathcal{P}_{\phi^*}(a^{t_n}=0|\theta=0,Q^{t_n}=Q')<P$. Without loss of generality, assume that the second inequality holds. Based on the assumption of unbounded private beliefs, there exists $Q$ such that $\lim_{s\rightarrow 1}\frac{f^Q_1(s)}{f^Q_0(s)+f^Q_1(s)}=1$ and $\lim_{s\rightarrow -1}\frac{f^Q_1(s)}{f^Q_0(s)+f^Q_1(s)}=0$. Consider community $N^{t_{n+1}}$ with the realized community size $Q$ and a sufficiently small private signal $s^{t_{n+1}}$. If the agents in $N^{t_{n+1}}$ take their action according to signals only, each receives the expected payoff
\begin{align*}
\frac{f_0^Q(s^{t_{n+1}})}{f_0^Q(s^{t_{n+1}})+f_1^Q(s^{t_{n+1}})}u(1,1,Q)+\frac{f_1^Q(s^{t_{n+1}})}{f_0^Q(s^{t_{n+1}})+f_1^Q(s^{t_{n+1}})}u(1,0,Q).
\end{align*}
If they simply follow the action in community $N^{t_{n}}$, each receives the expected payoff
\begin{align*}
&\frac{f_0^Q(s^{t_{n+1}})}{f_0^Q(s^{t_{n+1}})+f_1^Q(s^{t_{n+1}})}(\mathcal{P}_{\phi^*}(a^{t_n}=0|\theta=0,Q^{t_n}=Q')u(1,1,Q)\\
&+\mathcal{P}_{\phi^*}(a^{t_n}=1|\theta=0,Q^{t_n}=Q')u(1,0,Q))\\
+&\frac{f_1^Q(s^{t_{n+1}})}{f_0^Q(s^{t_{n+1}})+f_1^Q(s^{t_{n+1}})}(\mathcal{P}_{\phi^*}(a^{t_n}=1|\theta=1,Q^{t_n}=Q')u(1,1,Q)\\
&+\mathcal{P}_{\phi^*}(a^{t_n}=0|\theta=1,Q^{t_n}=Q')u(1,0,Q)).
\end{align*}
The difference between the two payoffs is bounded below by
\begin{align*}
(\frac{f_0^Q(s^{t_{n+1}})}{f_0^Q(s^{t_{n+1}})+f_1^Q(s^{t_{n+1}})}(1-P)-\frac{f_1^Q(s^{t_{n+1}})}{f_0^Q(s^{t_{n+1}})+f_1^Q(s^{t_{n+1}})})(u(1,1,Q)-u(1,0,Q)).
\end{align*}

Because private beliefs are bounded, there exists $s'>-1$ such that this difference is positive when $s^{t_{n+1}}\in(-1,s')$. It follows that there exists $\Delta>0$ such that $\mathcal{P}_{\phi^*}(a^{t_{n+1}}=\theta|Q^{t_{n+1}}=Q)>\mathcal{P}_{\phi^*}(a^{t_{n}}=\theta|Q^{t_{n}}=Q')+\Delta$. Hence, we have $\mathcal{P}_{\phi^*}(a^{t_{n+1}}=\theta)>\mathcal{P}_{\phi^*}(a^{t_{n}}=\theta)+\Delta G(Q')G(Q)>P-\epsilon+\Delta G(Q')G(Q)$. Because $\epsilon$ can be taken arbitrarily, when $\epsilon$ is sufficiently small, we have $\mathcal{P}_{\phi^*}(a^{t_{n+1}}=\theta)>P$, which is a contradiction.
\end{proof}

\begin{proof}[Proof of Theorem 3]

Consider the following strategy profile $\sigma$ for agents in an arbitrary community $N^t$:

\begin{itemize}
\item{1.} Given any $s^t$, agent $1$ observes $a^{t-1}_1$, and no other agent makes any observation.
\item{2.} If $h^t=\{a^{t-1}_1\}$, then each agent in $N^t$ takes the action that matches the state with higher probability according to $I^t$. Otherwise, each agent takes the opposite action (the action that matches the state with lower probability).
\end{itemize}

By Lemma 1, the action profile given $I^t$ specified above constitutes mutual best responses when $Q^t$ is sufficiently large. If $h^t\neq\{a^{t-1}_1\}$, the payoff before cost for each agent in $N^t$ is bounded above by $\frac{1}{2}(u(1,1,Q^t)+u(1,0,Q^t))$; if $h^t=\{a^{t-1}_1\}$, the payoff before cost is bounded below by
\begin{align*}
\frac{1}{2}((F_0^{Q^t}(\hat{s})+(1-F_1^{Q^t}(\hat{s})))u(1,1,Q^t)+(F_1^{Q^t}(\hat{s})+(1-F_0^{Q^t}(\hat{s})))u(1,0,Q^t)),
\end{align*}
where $\hat{s}$ is characterized by $f^{Q^t}_0(\hat{s})=f^{Q^t}_1(\hat{s})$. When $c$ is small, the difference between the two payoffs exceeds $c$ for sufficiently large $Q^t$. Hence, it is optimal to follow the observation decision in (1) above given that every other agent follows $\sigma$, which means that $\sigma$ is an equilibrium.

Note that in $\sigma$, starting from $t=2$, agents in $Q^t$ always observe $a^{t-1}_1$ regardless of $s^t$. Thus, we can apply Proposition 2 to obtain asymptotic learning in $\sigma$.
\end{proof}

\begin{proof}[Proof of Theorem 4]
Consider the following strategy profile $\sigma$ for agents in an arbitrary community $N^t$:
\begin{itemize}
\item{1.} Given any $s^t$, agent $1$ observes the neighborhood $B^t$ of size $K(t)$ that maximizes $\mathcal{P}_{\sigma}(\hat{a}^t=\theta|B^t)$, and no other agent makes any observation.
\item{2.} If $h^t=\{a_m:m\in B^t\}$: Fix some $\epsilon>0$. Let $s_1(\epsilon),s_0(\epsilon)$ be such that $F^{Q^t}_1(s_1(\epsilon))-F^{Q^t}_1(s_0(\epsilon))=F^{Q^t}_0(s_1(\epsilon))-F^{Q^t}_0(s_0(\epsilon))=1-2\epsilon$. An agent takes action $1$ if $s^t\geq s_1(\epsilon)$ and action $0$ if $s^t\leq s_0(\epsilon)$. Otherwise, an agent takes the action that matches the state with higher probability according to observation only.
\item{3.} If $h^t\neq\{a_m:m\in B^t\}$, each agent takes the action that matches the state with lower probability according to $I^t$.
\end{itemize}

From the proofs of Lemma 1 and Theorem 3, $\sigma$ is an equilibrium. We can then apply Theorem 1 to prove the result.
\end{proof}

\begin{proof}[Proof of Proposition 3]
Consider any community $N^t$ of size $Q$, $N^{'t}\subset N^t$ of size $Q'$ and any $I^t$. Let $a^t(I^t)$ be the truth-seeking action profile and $a^{'t}(I^t)$ be an arbitrary action profile with unanimous action. Without loss of generality, assume that $a_n^t(I^t)=1$ and $a_n^{'t}(I^t)=0$. Let $P$ denote the probability that $\theta=1$ given $I^t$. The definition of the truth-seeking action profile implies that $P\geq\frac{1}{2}$. Then, we have
\begin{align*}
&v_n^t(a^t(I^t),I^t)-v_n^t((\{a_i^{'t}(I^t)\}_{i\in N^{'t}},\{a_j^{t}(I^t)\}_{i\notin N^{'t}}),I^t)\\
=&Pu(1,1,Q)+(1-P)u(1,0,Q)-(Pu(1,0,Q')+(1-P)u(1,1,Q'))\\
&v_n^t(a^{'t}(I^t),I^t)-v_n^t((\{a_i^{t}(I^t)\}_{i\in N^{'t}},\{a_j^{'t}(I^t)\}_{i\notin N^{'t}}),I^t)\\
=&Pu(1,0,Q)+(1-P)u(1,1,Q)-(Pu(1,1,Q')+(1-P)u(1,0,Q')).
\end{align*}
It follows that
\begin{align*}
&v_n^t(a^t(I^t),I^t)-v_n^t((\{a_i^{'t}(I^t)\}_{i\in N^{'t}},\{a_j^{t}(I^t)\}_{i\notin N^{'t}}),I^t)\\
&-(v_n^t(a^{'t}(I^t),I^t)-v_n^t((\{a_i^{t}(I^t)\}_{i\in N^{'t}},\{a_j^{'t}(I^t)\}_{i\notin N^{'t}}),I^t))\\
=&(2P-1)(u(1,1,Q)-u(1,0,Q'))+(1-2P)(u(1,0,Q)-u(1,1,Q'))\\
=&(2P-1)(u(1,1,Q)-u(1,0,Q')-(u(1,0,Q)-u(1,1,Q')))\\
=&(2P-1)(u(1,1,Q)-u(1,0,Q)+u(1,1,Q')-u(1,0,Q'))\geq 0.
\end{align*}
Hence, the inequality is proven.
\end{proof}

\begin{proof}[Proof of Proposition 4]
From Theorem 2, we know that truth-telling observation occurs in every $\sigma^*_T$. Suppose that $Q^t=Q$. From the characterization of $s^*(Q)$, we know that for any $s^t\in(-s^*(Q),s^*(Q))$, agents in $N^t$ prefer paying $c$ to know the true state over paying nothing and acting according to $s^t$. It then follows that when $t$ is sufficiently large, whenever $s^t\in(-s^*,s^*)$, the equilibrium observation in $\sigma^*_T$ is non-empty; otherwise, given the truth-seeking action profile, any agent can be better off by paying $c$ and observing a neighborhood of size $K(t)$. Therefore, at the limit, an agent takes the correct action if and only if her signal lies in $[-s^*(Q),s^*(Q)]$ and follows her signal otherwise. The probability of her action matching the state, $\mathcal{P}_{\sigma^*_T}(a_n^t=\theta|Q^t=Q)$, is then equal to $F_0(s^*(Q))$. Finally, by taking expectation over the probability distribution $G$, we obtain $\mathcal{P}_{\sigma^*_T}(a_n^t=\theta)=\mathbb{E}_G[F_0(s^*(Q))]$.

\end{proof}

\begin{proof}[Proof of Corollary 1]

Note that $F_0(s^*(Q))$ is increasing in $Q$. The result follows by applying the relevant property of first-order stochastic dominance.

\end{proof}

\begin{proof}[Proof of Proposition 5]
Suppose that there exists a symmetric equilibrium $\phi^*$ with asymptotic learning. Since agents in the same community are using identical strategies by assumption, I use $s^t$ to denote the private signal of an arbitrary agent in $N^t$.

Without loss of generality, suppose that the true state is $1$. Fix an arbitrary positive number $\epsilon_1$. Since asymptotic learning occurs, we can find $\hat{t}$ such that: if the true state is $0$, for every $t\geq\hat{t}$ and every agent in $N^{t}$, the agent takes action $0$ with a probability of at least $1-\epsilon_1$. Hence the probability that every agent in $N^{\hat{t}}$ takes action $0$ is at least $(1-\epsilon_1)^{Q^{\hat{t}}}$. By asymptotic learning and continuity of the signal distributions, for every positive number $\epsilon_2$ we can find a sufficiently small $\epsilon_1$ and the corresponding $\hat{t}$ such that an arbitrary agent in $N^{\hat{t}+1}$ takes action $0$ with a probability of at least $1-\epsilon_2$ under either state, given that each agent in $N^{\hat{t}}$ has taken action $0$. Note that since $\hat{t}$ is finite, the probability that each agent in $N^{\hat{t}}$ takes action $0$ under the true state $1$ is still positive and bounded away from $0$.

Now consider the optimal action of an agent in $N^{\hat{t}+1}$. Signals are independent, so with probability of at least $(1-\epsilon_2)^{Q^{\hat{t}+1}-1}$ all the other agents in $N^{\hat{t}+1}$ will choose action $0$. If the agent chooses action $0$, her payoff is bounded below by $(1-\epsilon_2)^{Q^{\hat{t}+1}-1}u(1,0,Q^{\hat{t}+1})$; if she chooses action $1$, her payoff is bounded above by $\max\{(1-\epsilon_2)^{Q^{\hat{t}+1}-1}u(1,1,1)+(1-(1-\epsilon_2)^{Q^{\hat{t}+1}-1})u(1,0,Q^{\hat{t}+1}),(1-\epsilon_2)^{Q^{\hat{t}+1}-1}u(1,0,1)+(1-(1-\epsilon_2)^{Q^{\hat{t}+1}-1})u(1,1,Q^{\hat{t}+1})\}$. When $Q^{\hat{t}+1}$ is sufficiently large, we can find sufficiently small $\epsilon_2$ such that the first bound is higher than the second bound. Hence, the agent will choose action $0$ regardless of her private signal. This argument then works for every $t'\geq \hat{t}+1$, which means that herding occurs. This is a contradiction to asymptotic learning.
\end{proof}

\begin{proof}[Proof of Proposition 6]

In any equilibrium, the number of agents choosing action $1$, denoted $Q^t_1$, must satisfy
\begin{align*}
&Pu(1,1,Q^t_1)+(1-P)u(1,0,Q^t_1)\geq(1-P)u(1,1,Q^t-Q^t_1+1)+Pu(1,0,Q^t-Q^t_1+1)\\
&(1-P)u(1,1,Q^t-Q^t_1)+Pu(1,0,Q^t-Q^t_1)\geq Pu(1,1,Q^t_1+1)+(1-P)u(1,0,Q^t_1+1).
\end{align*}

From the second inequality, we have
\begin{align*}
u(1,1,Q^t-Q^t_1)-u(1,0,Q^t_1+1)\geq\frac{P}{1-P}(u(1,1,Q^t_1+1)-u(1,0,Q^t-Q^t_1))
\end{align*}
From the assumption that $P>1$, we know that either (1) $u(1,1,Q^t-Q^t_1)-u(1,0,Q^t_1+1)$ is positive and $u(1,1,Q^t-Q^t_1)-u(1,0,Q^t_1+1)\geq u(1,1,Q^t_1+1)-u(1,0,Q^t-Q^t_1)$ or (2) both $u(1,1,Q^t-Q^t_1)-u(1,0,Q^t_1+1)$ and $u(1,1,Q^t_1+1)-u(1,0,Q^t-Q^t_1)$ are non-positive.

Assume that (1) holds. Note that $u(1,1,Q^t-Q^t_1)-u(1,1,Q^t_1+1)$ and $u(1,1,Q^t-Q^t_1)-u(1,1,Q^t_1+1)$ have the same sign. Then, from $u(1,1,Q^t-Q^t_1)-u(1,0,Q^t_1+1)\geq u(1,1,Q^t_1+1)-u(1,0,Q^t-Q^t_1)$, we have $Q^t-Q^t_1\leq Q^t_1+1$, which implies that $Q^t_1\geq \frac{1}{2}Q^t$.

Assume that (2) holds. Then, from the two non-positive expressions, it follows that $Q^t-Q^t_1>Q^t_1+1$ and $Q^t_1+1>Q^t-Q^t_1$, which is a contradiction. Hence, we can conclude that the only possible case is (1), and thus, $Q^t_1\geq \frac{1}{2}Q^t$.
\end{proof}

\newpage
\bibliographystyle{acm}
\bibliography{reference}

\begin{thebibliography}{10}

\bibitem{ADLO}
{\sc Acemoglu, D., Dahleh, M.~A., Lobel, I., and Ozdaglar, A.}
\newblock Bayesian learning in social networks.
\newblock {\em Review of Economic Studies 78}, 4 (2011), 1201--1236.

\bibitem{BG}
{\sc Bala, V., and Goyal, S.}
\newblock Learning from neighbours.
\newblock {\em The Review of Economic Studies 65}, 3 (1998), 595--621.

\bibitem{BG2}
{\sc Bala, V., and Goyal, S.}
\newblock Conformism and diversity under social learning.
\newblock {\em Economic Theory 17}, 1 (2001), 101--120.

\bibitem{BF}
{\sc Banerjee, A., and Fudenberg, D.}
\newblock Word-of-mouth learning.
\newblock {\em Games and Economic Behavior 46}, 1 (2004), 1--22.

\bibitem{Banerjee}
{\sc Banerjee, A.~V.}
\newblock A simple model of herd behavior.
\newblock {\em The Quarterly Journal of Economics 107}, 3 (1992), 797--817.

\bibitem{Banerjee2}
{\sc Banerjee, A.~V.}
\newblock The economics of rumours.
\newblock {\em The Review of Economic Studies 60}, 2 (1993), 309--327.

\bibitem{BHW}
{\sc Bikhchandani, S., Hirshleifer, D., and Welch, I.}
\newblock A theory of fads, fashion, custom, and cultural change as
  informational cascades.
\newblock {\em Journal of Political Economy 100}, 5 (1992), 992--1026.

\bibitem{BHW2}
{\sc Bikhchandani, S., Hirshleifer, D., and Welch, I.}
\newblock Learning from the behavior of others: Conformity, fads, and
  informational cascades.
\newblock {\em Journal of Economic Perspectives 12}, 3 (1998), 151--170.

\bibitem{CCF}
{\sc Cai, H., Chen, Y., and Fang, H.}
\newblock Observational learning: Evidence from a randomized natural field
  experiment.
\newblock {\em American Economic Review 99}, 3 (2009), 864--882.

\bibitem{CH}
{\sc Callander, S., and Horner, J.}
\newblock The wisdom of the minority.
\newblock {\em Journal of Economic Theory 144}, 4 (2009), 1421--1439.

\bibitem{Celen}
{\sc Celen, B.}
\newblock Does costly information acquisition improve welfare in a social
  learning environment?
\newblock {\em Working Paper\/} (2008).

\bibitem{CK}
{\sc Celen, B., and Kariv, S.}
\newblock Observational learning under imperfect information.
\newblock {\em Games and Economic Behavior 47}, 1 (2004), 72--86.

\bibitem{Choi}
{\sc Choi, J.~P.}
\newblock Herd behavior, the "penguin effect," and the suppression of
  informational diffusion: An analysis of informational externalities and
  payoff interdependency.
\newblock {\em The RAND Journal of Economics 28}, 3 (1997), 407--425.

\bibitem{CU}
{\sc Conley, T., and Udry, C.}
\newblock Social learning through networks: The adoption of new agricultural
  technologies in ghana.
\newblock {\em American Journal of Agricultural Economics 83}, 3 (2001),
  668--673.

\bibitem{Dasgupta}
{\sc Dasgupta, A.}
\newblock Social learning with payoff complementarities.
\newblock {\em Working paper\/} (2000).

\bibitem{DVZ}
{\sc DeMarzo, P.~M., Vayanos, D., and Zwiebel, J.}
\newblock Persuasion bias, social influence, and unidimensional opinions.
\newblock {\em The Quarterly Journal of Economics 118}, 3 (2003), 909--968.

\bibitem{DiamondandDybvig}
{\sc Diamond, D.~W., and Dybvig, P.~H.}
\newblock Bank runs, deposit insurance and liquidity.
\newblock {\em Journal of Political Economy 91}, 3 (1983), 401--419.

\bibitem{Drehmann}
{\sc Drehmann, M., Oechssler, J., and Roider, A.}
\newblock Herding with and without payoff externalities -- an internet
  experiment.
\newblock {\em International Journal of Industrial Organization 25}, 2 (2007),
  391--415.

\bibitem{EF}
{\sc Ellison, G., and Fudenberg, D.}
\newblock Rules of thumb for social learning.
\newblock {\em Journal of Political Economy 101}, 4 (1993), 612--643.

\bibitem{EF2}
{\sc Ellison, G., and Fudenberg, D.}
\newblock Word-of-mouth communication and social learning.
\newblock {\em The Quarterly Journal of Economics 110}, 1 (1995), 93--125.

\bibitem{Frisell}
{\sc Frisell, L.}
\newblock On the interplay of informational spillovers and payoff
  externalities.
\newblock {\em The RAND Journal of Economics 34}, 3 (2003), 582--592.

\bibitem{GK}
{\sc Gale, D., and Kariv, S.}
\newblock Bayesian learning in social networks.
\newblock {\em Games and Economic Behavior 45}, 2 (2003), 329--346.

\bibitem{GJ}
{\sc Golub, B., and Jackson, M.~O.}
\newblock Naive learning in social networks and the wisdom of crowds.
\newblock {\em American Economic Journal: Microeconomics 2}, 1 (2010),
  112--149.

\bibitem{HP}
{\sc Hung, A.~A., and Plott, C.~R.}
\newblock Information cascades: Replication and an extension to majority rule
  and conformity-rewarding institutions.
\newblock {\em American Economic Review 91}, 5 (2001), 1508--1520.

\bibitem{IL}
{\sc Ioannides, Y.~M., and Loury, L.~D.}
\newblock Job information networks, neighborhood effects, and inequality.
\newblock {\em Journal of Economic Literature 42}, 4 (2004), 1056--1093.

\bibitem{JT}
{\sc Jeitschko, T.~D., and Taylor, C.~R.}
\newblock Local discouragement and global collapse: A theory of coordination
  avalanches.
\newblock {\em American Economic Review 91}, 1 (2001), 208--224.

\bibitem{KM2}
{\sc Kultti, K.~K., and Miettinen, P.~A.}
\newblock Herding with costly information.
\newblock {\em International Game Theory Review 8}, 2 (2006), 21--33.

\bibitem{KM1}
{\sc Kultti, K.~K., and Miettinen, P.~A.}
\newblock Herding with costly observation.
\newblock {\em The B.E. Journal of Theoretical Economics 7}, 1 (Topics) (2007),
  Article 28.

\bibitem{Lee}
{\sc Lee, I.~H.}
\newblock On the convergence of informational cascades.
\newblock {\em Journal of Economic Theory 61}, 2 (1993), 395--411.

\bibitem{Munshi}
{\sc Munshi, K.}
\newblock Networks in the modern economy: Mexican migrants in the u.s. labor
  market.
\newblock {\em The Quarterly Journal of Economics 118}, 2 (2003), 549--597.

\bibitem{Munshi2}
{\sc Munshi, K.}
\newblock Social learning in a heterogeneous population: Technology diffusion
  in the indian green revolution.
\newblock {\em Journal of Development Economics 73}, 1 (2004), 185--213.

\bibitem{ScharfsteinandStein}
{\sc Scharfstein, D.~S., and Stein, J.~C.}
\newblock Herd behavior and investment.
\newblock {\em American Economic Review 80}, 3 (1990), 465--479.

\bibitem{SS}
{\sc Smith, L., and Sorensen, P.}
\newblock Pathological outcomes of observational learning.
\newblock {\em Econometrica 68}, 2 (2000), 371--398.

\bibitem{SS2}
{\sc Smith, L., and Sorensen, P.}
\newblock Rational social learning with random sampling.
\newblock {\em Working Paper\/} (2013).

\bibitem{Song}
{\sc Song, Y.}
\newblock Social learning with endogenous network formation.
\newblock {\em Working paper\/} (2015).

\bibitem{Vergari}
{\sc Vergari, C.}
\newblock Herd behaviour, strategic complementarities and technology adoption.
\newblock {\em CORE Discussion Paper}, 2004063 (2004).

\end{thebibliography}

\end{document}